\documentclass[pra,aps,preprint,showpacs,preprintnumbers,
amsmath,amssymb,floatfix]{revtex4-1}
\usepackage{enumerate}
\usepackage{graphicx}
\usepackage{amsmath, amsthm, amssymb,bm}
\newtheorem{theorem}{Theorem}

\newtheorem{lemma}{Lemma}

\newtheorem{remark}{Remark}

\begin{document}
\newcommand{\real}{\textrm{Re}\:}
\newcommand{\sto}{\stackrel{s}{\to}}
\newcommand{\supp}{\textrm{supp}\:}
\newcommand{\wto}{\stackrel{w}{\to}}
\newcommand{\ssto}{\stackrel{s}{\to}}
\newcounter{foo}
\providecommand{\norm}[1]{\lVert#1\rVert}
\providecommand{\abs}[1]{\lvert#1\rvert}

\title{Universal low-energy behavior in three-body systems}

\author{Dmitry K. Gridnev}
\affiliation{FIAS, Ruth-Moufang-Stra{\ss}e 1, D--60438 Frankfurt am Main,
Germany}
\altaffiliation[On leave from:  ]{ Institute of Physics, St. Petersburg State
University, Ulyanovskaya 1, 198504 Russia}

\begin{abstract}
We consider a pairwise interacting quantum 3-body system in 3-dimensional space
with finite masses and the interaction term $V_{12} + \lambda(V_{13} + V_{23})$, where all pair potentials are assumed to be nonpositive.
The pair interaction of the particles $\{1,2\}$ is tuned to make them have a zero energy resonance 
and no negative energy bound states. The
coupling constant $\lambda >0$ is allowed to take the values for which the particle pairs
$\{1,3\}$ and $\{2,3\}$ have 
no bound states with negative energy. Let $\lambda_{cr}$ denote the critical
value of the coupling constant such that $E(\lambda) \to -0$ for  
$\lambda \to \lambda_{cr}$, where $E(\lambda)$ is the ground state energy of the
3-body system. We prove the theorem, which states that near $\lambda_{cr}$ one has $E(\lambda)
= C (\lambda-\lambda_{cr})[\ln (\lambda-\lambda_{cr})]^{-1}+$h.t., 
where $C$ is a constant and h.t. stands for ``higher terms''. This behavior of the
ground state energy is universal (up to the value of the constant $C$), meaning that it is independent of the form of pair
interactions. 
\end{abstract}

\maketitle


\section{Introduction}\label{sec:1}

Universality plays an important role in physics. The interest to it is inspired by the striking similarity 
in behavior near the critical point among systems
that are otherwise quite different in nature. For example, various substances, which exhibit liquid-gas phase transition, near the critical point 
obey the universal law $\rho_{gas} -\rho_c \to -A(T_c - T)^\beta$. Here $\rho_{gas}, \rho_c$ denote the density of gas and critical density respectively, $T,T_c$ are 
temperature and critical temperature, $A$ is a constant and $\beta$ is the so-called critical exponent \cite{hammer}. Amazingly, the value of $\beta \simeq 0.325$ is the same 
for many substances, which are completely different on the atomic level. Similar law with the same value of the critical exponent holds true for magnetization 
in ferromagnets as a function of temperature. Another example of universality is found in the ground state energy of the Bose gas  as a function of density. In the low density limit it 
approaches an expression, which depends only on the scattering length but not on the overall form of pair interaction \cite{yngv}.  

Small quantum systems also exhibit universal features \cite{hammer}. 
One example of universality in the two-particle case concerns the behavior of
the energy depending on the coupling constant near the threshold. Suppose that 
$E(\lambda)$ is the energy of an isolated non-degenerate state of the Hamiltonian $h(\lambda) = T + \lambda
V_{12}$ in 3-dimensional space and $E(\lambda) \to 0$ for $\lambda \to
\lambda_{cr}$. 
Then universally for $\lambda$ near $\lambda_{cr}$ one has $E(\lambda) = 
c(\lambda - \lambda_{cr})^2 +$\rm{h.t.} or $E(\lambda) \simeq c(\lambda - \lambda_{cr})+$\rm{h.t.} 
depending 
on whether zero is an eigenvalue of $h(\lambda_{cr})$ or not, see \cite{klaus1}. \textbf{Universal} in this context means 
that this behavior up to a constant is true for all short range interactions independently of their form.  
``h.t.'' is the shorthand notation for ``higher terms'' and $E(\lambda) = f(\lambda) +$\rm{h.t.} for $f(\lambda) \to 0$ always
implies that $E(\lambda) = f(\lambda) + \hbox{o}\bigl(f(\lambda)\bigr)$. 

If two particles are set into an 
n-dimensional space 
the scenario depends on the space dimension \cite{klaus1}: for example, in 2-dimensional flatland 
the energy of the ground state energy can approach zero exponentially fast 
$E(\lambda) =\exp (-c(\lambda-\lambda_{cr})^{-1})+$\rm{h.t.}, 
and in 4 dimensions the ground state energy approaches zero very slow, namely, 
$E(\lambda) = c(\lambda - \lambda_{cr})|\log (\lambda -
\lambda_{cr})|^{-1}+$\rm{h.t.}. For a full account of possible scenarios see Table~I in \cite{klaus1}. 
Another universality associated with 2-body system in 3-dimensional space relates to the 
wave function near the threshold. If the energy of a non-degenerate bound state near
the threshold satisfies $E(\lambda) = c(\lambda - \lambda_{cr})^2+$\rm{h.t.} then the wave
function 
of this bound state $\psi(\lambda, x)$ approaches spherically
symmetric expression \cite{schwinger} 
\begin{equation}\label{24.1:1}
 \left\| \psi(\lambda, x) - |E(\lambda)|^{1/4}\frac{e^{-|E(\lambda)|^{\frac 12}
|x|}}{\sqrt{2\pi}|x|}\right\| \to 0 , 
\end{equation}
see Eq.~(8) in \cite{3}, where we have omitted the phase factor. For
well-behaved short-range interactions Eq.~(\ref{24.1:1}) holds irrespectively of the form
of the pair potential.

In 3-particle systems the notorious example of universality is the Efimov
effect. Efimov's striking and counterintuitive prediction \cite{vefimov} was that just by tuning coupling constants of the short-range interactions in the 
3-body system one can bind an infinite number of levels, even though the two-body subsystems bind none. The infinitude of bound states was shown rigorously by Yafaev in \cite{yafaev}. 
Basing on the Yafaev's method Sobolev \cite{sobolev} has proved that 
\begin{equation}
 \lim_{\epsilon \to 0} |\ln \epsilon|^{-1}N(\epsilon) = \mathfrak{U}_0/2, 
\end{equation}
where $N(\epsilon)$ is the number of bound states with the energy less than $-\epsilon < 0$ and $\mathfrak{U}_0$ is the universal positive constant, 
which depends only on masses. Let us remark that in physics\cite{vefimov,hammer,helfrich} it is generally conjectured that 
\begin{equation}\label{1.14;10}
 \lim_{n \to \infty} \frac{|E_n|}{|E_{n+1}|} = e^{2\pi/s_0} , 
\end{equation}
where $E_n$ is the energy of the $n$-th Efimov level and $s_0 = \pi \mathfrak{U}_0$. In physics this is termed as universal scaling of Efimov levels, see \cite{hoegkrohn,lakaev} for the mathematical discussion of (\ref{1.14;10}). 
Efmiov's prediction was later confirmed experimentally in ultracold gases \cite{kraemer}. 
The so-called 4-body universality \cite{naturephysics} holds only approximately \cite{myfbs} and the question of finite range corrections 
is still being debated \cite{debate}.

For further discussion it is useful to introduce the following mathematical notations. 
For an operator $A$ acting on a Hilbert space $D(A)$, $\sigma(A)$ and $\sigma_{ess} (A)$ denote the domain, the spectrum, and 
the essential spectrum of $A$ respectively \cite{reed}. $A > 0$ means that $(f, Af) > 0$ for all $f \in D(A)$, while 
$A \ngeq 0$ means that there exists $f_0 \in D(A)$ such that $(f_0, Af_0) < 0$. 
$\mathfrak{B}(\mathcal{H})$ denotes the set of bounded linear operators on the Hilbert space $\mathcal{H}$.  For an interval
$\Omega \subset \mathbb{R}$ the function 
$\chi_\Omega : \mathbb{R} \to \mathbb{R}$ is such that $\chi_\Omega (x) = 1$ if
$x\in \Omega$ and $\chi_\Omega (x) = 0$ otherwise.

Recently a new type of universality in 3-body systems has been discovered in \cite{3}. 
Consider the Hamiltonian of the 3--particle system in $\mathbb{R}^3$
\begin{equation}\label{hami}
 H(\lambda) = H_0 + v_{12} + \lambda (v_{13} + v_{23}) ,
\end{equation}
where $H_0$ is the kinetic energy operator with the center of mass removed,
$\lambda >0$ is the coupling constant and none of the particle
pairs has negative energy bound states. All particles are supposed to have a finite mass. The pair--interactions $v_{ik}$ are operators of multiplication
by real $V_{ik} (r_i - r_k) \leq 0$ and $r_i \in \mathbb{R}^3$ are particle position vectors. For pair potentials we require like in \cite{3} that 
\begin{equation}
\gamma_0 := \max_{i=1,2}\max\left[ \int d^3 r \bigl| V_{i3} (r)\bigr|^2 , \int
d^3 r \bigl| V_{i3} (r)\bigr| (1+|r|)^{2\delta}\right] < \infty , \label{restr}
\end{equation}
where $0 < \delta < 1/8$ is a fixed constant, and
\begin{equation}\label{restr3}
 -b_1 e^{-b_2 |r|} \leq V_{12} (r) \leq 0 ,
\end{equation}
where $b_{1,2} >0$ are constants.

We shall assume that the interaction between 
the particles $\{1,2\}$ is tuned to make them have a zero energy resonance 
and no negative energy bound states \cite{yafaev2}. This implies that in the absence of particle 3 the particles $\{1,2\}$ are ``almost'' bound, that is a bound state 
with negative energy appears if and only if the interaction $v_{12}$ is strengthened by a negligibly small amount. In mathematical terms this can be expressed as follows 
\begin{gather}
 H_0 + v_{12} >0 \label{14.1;2}\\
H_0 + (1+\varepsilon)v_{12} \ngeq 0 \quad \quad \textnormal{for all $\varepsilon >0$.} \label{14.1;3}
\end{gather}
Let $\lambda'_{1,2}$ be the values of the coupling constants such that $H_0 + \lambda'_1 v_{13}$ 
and $H_0 + \lambda'_2 v_{23}$ are at critical coupling in the sense of Def.~1 in \cite{2}. This means that $H_0 + \lambda'_i v_{i3} >0$ 
and $H_0 + (1+\varepsilon)\lambda'_i v_{i3} \ngeq 0$ for all $\varepsilon >0$ and $i=1,2$. Let us set 
\begin{equation}
 \tilde \lambda := \min [\lambda'_1 , \lambda'_2].  \label{tildelambda}
\end{equation}
We shall always assume that the coupling constant in (\ref{hami}) satisfies the inequality $\lambda < \tilde \lambda$. In other words, the coupling constant takes the values 
for which the particle pairs $\{1,3\}$ and $\{2,3\}$ have neither zero energy resonances nor bound states with negative energy. Under these conditions 
$\sigma_{ess} (H(\lambda)) =[0, \infty)$ and $H(\lambda)$ has a finite number of bound states with negative energy as proved in \cite{yafaevnew}. 

Let 
$\lambda_{cr}$ be the value of the coupling constant such that $H(\lambda_{cr}) \geq 0$ but $H(\lambda_{cr} + \epsilon) \ngeq 0$ for all $\epsilon >0$. 
In Sec. 6 in \cite{1} it is proved that 
$\lambda_{cr} \in (0, \tilde \lambda)$. 
By the HVZ theorem (see \cite{reed}, Vol. 4 and \cite{teschl}) 
for $\lambda \in (\lambda_{cr}, {\tilde \lambda})$ 
\begin{equation}
 H(\lambda) \psi_\lambda = E(\lambda) \psi_\lambda , 
\end{equation}
where $E(\lambda) <0$ is the ground state energy and $\psi_\lambda \in D(H_0)$. By Theorems~1, 3 in 
\cite{1} and Theorem~2 in \cite{3} $E(\lambda) \nearrow 0$ for $\lambda \searrow \lambda_{cr}$ but zero is not an eigenvalue of $H(\lambda_{cr})$. Moreover, 
$\psi_\lambda$ for $\lambda \searrow \lambda_{cr}$ totally spreads, that is 
\begin{equation}\label{14.1;1}
 \lim_{\lambda\to \lambda_{cr}} \int_{|x^2| + |y^2| < R} |\psi_\lambda (x,y)|^2 d^3 x d^3 y= 0  \quad \quad \textnormal{for all $R>0$} . 
\end{equation}
In (\ref{14.1;1}) $x,y \in \mathbb{R}^3$ are Jacobi coordinates in the 3-body problem, which are shown in Fig.~\ref{fig:1} (left). They are defined as 
\begin{gather}
 x =\alpha^{-1} (r_2 - r_1 ) \nonumber, \\
y = \frac{\sqrt{M_{12}}}{\hbar} \left[ r_3 - \frac{m_1}{(m_1 + m_2)} r_1 - \frac{m_2}{(m_1 + m_2)} r_2\right] \nonumber , 
\end{gather}
where $m_i$ denote particle masses, $\alpha = \hbar (m_1 + m_2)^{\frac 12} (2m_1 m_2)^{-\frac 12}$ and 
$M_{12} = (m_1 + m_2) m_3 /(m_1 + m_2 + m_3)$.

In \cite{3} it was proved that $\psi_\lambda$ for $\lambda\to \lambda_{cr}$ 
approaches in norm a universal expression, namely, 
\begin{equation}\label{3bh3}
  \psi_\lambda \to \frac{1}{\sqrt 2 \pi^{3/2} |\ln
|E(\lambda)||^{1/2}} \frac{ \bigl\{ |x|\sin(k_n |y|) + |y|\cos(k_n |y|)\bigr\}\exp(-|E(\lambda)|^{1/2}|x|)}{1+|x|^3|y|+|y|^3 |x|} . 
\end{equation}
In the limit the wave function $\psi_\lambda$ describes the state, in which average distances between all three particles go to infinity. 
(This is partly the reason why the short range details of pair interactions becomes unimportant). 

By analogy with the 2-particle case 
it is natural to assume that $E(\lambda)$ would exhibit universal behavior near $\lambda_{cr}$. In this paper we shall prove Theorem~\ref{th:main}, which states 
that it is indeed so and universally 
one has $E(\lambda)
= C (\lambda-\lambda_{cr})[\ln (\lambda-\lambda_{cr})]^{-1} +$\rm{h.t.}, where $C >0$. 
It is important to stress that this result does 
not follow directly from (\ref{3bh3}), if one merely tries to substitute (\ref{3bh3}) into the equation $E(\lambda) = \bigl(\psi_\lambda , H(\lambda) \psi_\lambda \bigr)$.  
This is due to the error terms, which in spite of going to zero in norm may affect the 
resulting average, for a detailed explanation see Remark~\ref{newremark} in the next section. The obtained behavior of $E(\lambda)$ remarkably mimics that of the ground state 
energy of 2 particles in 4-dimensional space. The experimental 
observation of this type of universality can possibly be obtained in ultracold gas mixtures, see \cite{3} for discussion. 
When the pair interaction $v_{12}$ is not tuned there are 3 
types of possible asymptotic behavior of $E(\lambda)$, which are listed in Theorem~\ref{th:main1a}.

\begin{figure}
\includegraphics[height=0.16\textheight]{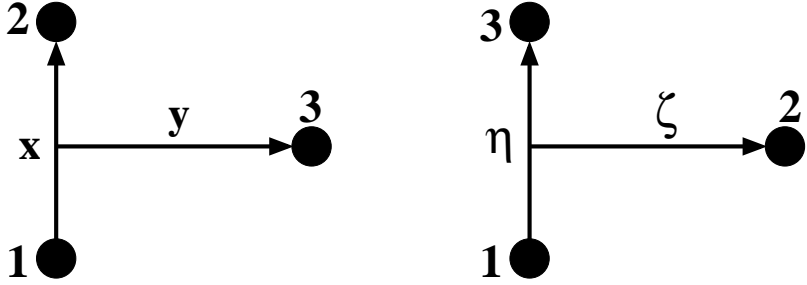}
\caption{Jacobi coordinates in the three-particle problem. $y$ points in the direction from the center of mass of the particles $\{1,2\}$ to particle $3$.
 $\zeta$ points in the direction from the center of mass of the particles $\{1,3\}$ to particle $2$.  The picture shows only directions of
the vectors, the scales are set in order to ensure that $H_0 =
-\Delta_{x}-\Delta_{y}$ and $H_0 =
-\Delta_{\eta}-\Delta_{\zeta}$ holds.}
\label{fig:1}
\end{figure}

\section{Main Result}\label{sec:2}

The Hamiltonian $H(\lambda)$ in (\ref{hami}) is self--adjoint on 
$D(H_0) = \mathcal{H}^2 (\mathbb{R}^{6}) \subset L^2 ( \mathbb{R}^{6})$, where 
 $\mathcal{H}^2 (\mathbb{R}^{6})$ denotes the 
corresponding Sobolev space \cite{teschl,liebloss}. 
The pair interaction between particles $\{1,2\}$ is tuned 
so that they have a zero energy resonance, that is Eqs.~(\ref{14.1;2})-(\ref{14.1;3}) hold.

Our aim in this paper is to prove 
\begin{theorem}\label{th:main}
Suppose that $E(\lambda):= \inf \sigma(H(\lambda))$, then for $\lambda \searrow \lambda_{cr}$ one has
\begin{equation}\label{main}
 E (\lambda) =  C_0 \frac{(\lambda- \lambda_{cr})}{\ln (\lambda-\lambda_{cr})} + \textnormal{\hbox{o}} \left( \frac{(\lambda- \lambda_{cr})}{\left| \ln (\lambda-\lambda_{cr})\right|}\right) , 
\end{equation}
where $C_0 > 0$ is a finite constant. 
\end{theorem}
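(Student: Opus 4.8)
The plan is to bypass the asymptotic wave function (\ref{3bh3}) entirely — precisely because, as stressed before Theorem~\ref{th:main}, the norm-vanishing error terms may still contribute to the average $(\psi_\lambda,H(\lambda)\psi_\lambda)$ at the relevant order — and instead to run the Birman--Schwinger principle with the subsystem operator $\mathcal A:=H_0+v_{12}\ge 0$ as the unperturbed part. Writing $W:=v_{13}+v_{23}\le 0$ and $u:=|W|^{1/2}$, a number $E=-\kappa^2<0$ is an eigenvalue of $H(\lambda)$ if and only if $1/\lambda$ is an eigenvalue of the positive compact operator
\[
\mathcal K(\kappa):=u\,(\mathcal A+\kappa^2)^{-1}u .
\]
Compactness for small $\kappa$ follows from the decay hypothesis (\ref{restr}); since $(\mathcal A+\kappa^2)^{-1}$ has a positive kernel and $u\ge 0$, the operator $\mathcal K(\kappa)$ is positivity improving, so by Perron--Frobenius its top eigenvalue $\mu_0(\kappa):=\|\mathcal K(\kappa)\|$ is simple with a strictly positive eigenvector $\phi_\kappa$. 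The ground state energy is characterized by $\lambda\,\mu_0(\kappa)=1$, and at threshold $\lambda_{cr}\,\mu_0(0)=1$. Thus the whole theorem reduces to the small-$\kappa$ expansion of $\mu_0(\kappa)$ followed by inversion of $\lambda=1/\mu_0(\kappa)$.

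The central claim I would establish is
\[
\mu_0(\kappa)=\mu_0(0)-A\,\kappa^2|\ln\kappa|+o\!\left(\kappa^2|\ln\kappa|\right),\qquad A>0 .
\]
This rests on the Jensen--Kato-type low-energy expansion of the two-cluster resolvent. Because $\mathcal A=h_{12}\otimes 1+1\otimes(-\Delta_y)$ with $h_{12}=-\Delta_x+v_{12}$ carrying, by (\ref{14.1;2})--(\ref{14.1;3}), a zero-energy resonance $\psi_r\sim|x|^{-1}$ (so $u\psi_r\in L^2$ and $\psi_r>0$), one has the singular expansion $(h_{12}+s)^{-1}=c\,s^{-1/2}|\psi_r\rangle\langle\psi_r|+O(1)$ as $s\to 0^+$ with $c>0$. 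Diagonalizing $-\Delta_y$ in the $y$-momentum variable $p$, setting $s=p^2+\kappa^2$, and letting $g_p$ denote the $y$-Fourier transform of $(u\phi_0)(\cdot,y)$, first-order perturbation theory for the simple eigenvalue gives $\mu_0(\kappa)-\mu_0(0)=(\phi_0,[\mathcal K(\kappa)-\mathcal K(0)]\phi_0)+O(\|\mathcal K(\kappa)-\mathcal K(0)\|^2)$, with leading matrix element
\[
c\int d^3p\,\bigl|(\psi_r,g_p)\bigr|^2\Bigl[\tfrac{1}{\sqrt{p^2+\kappa^2}}-\tfrac{1}{|p|}\Bigr]\;\sim\;-A\,\kappa^2|\ln\kappa|,\qquad A=2\pi c\,\bigl|(\psi_r,g_0)\bigr|^2>0 .
\]
The logarithm is the marginal one: after rescaling $p=\kappa q$ the integrand behaves like $-1/(2q)$ at large $q$, and the overlap factor $|(\psi_r,g_p)|^2$ supplies the cutoff that converts the would-be divergence into $|\ln\kappa|$. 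Strict positivity of $A$ comes from $\phi_0>0$, $\psi_r>0$, $u\ge 0$, so $(\psi_r,g_0)\neq 0$. The same estimate yields $\|\mathcal K(\kappa)-\mathcal K(0)\|=O(\kappa^2|\ln\kappa|)$, whence the second-order remainder is $O(\kappa^4(\ln\kappa)^2)=o(\kappa^2|\ln\kappa|)$ and the expansion is genuinely first order (the regular part of $(h_{12}+s)^{-1}$ contributes only $O(\kappa^2)$).

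The inversion is then elementary. Using $\mu_0(0)=1/\lambda_{cr}$ gives $\lambda-\lambda_{cr}=\lambda_{cr}^2 A\,\kappa^2|\ln\kappa|+o(\kappa^2|\ln\kappa|)$. With $s:=|E|=\kappa^2$ and $t:=\lambda-\lambda_{cr}$ this reads $t=\tfrac12\lambda_{cr}^2 A\,s|\ln s|(1+o(1))$, and inverting with $\ln s=\ln t+O(\ln|\ln t|)$ yields $s=\tfrac{2}{\lambda_{cr}^2A}\,t/|\ln t|\,(1+o(1))$, which restoring the signs $E<0$, $\ln t<0$ is exactly (\ref{main}) with $C_0=2/(\lambda_{cr}^2 A)>0$. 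I expect the main obstacle to be the second paragraph: proving the singular resolvent expansion and the matrix-element asymptotics rigorously and uniformly enough to control the remainder as an operator (not merely in the state $\phi_0$), at the two-cluster threshold where the subsystem resonance and the free $y$-continuum interact. This marginal logarithm is the analytic shadow of the $|\ln|E||^{-1/2}$ normalization in (\ref{3bh3}) and the reason the naive substitution fails; the hypotheses (\ref{restr}) and (\ref{restr3}) are precisely what secure the decay, compactness, and spectral gap required for the perturbation argument.
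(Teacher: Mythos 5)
Your reduction (Birman--Schwinger principle for $\mathcal K(\kappa)=u(\mathcal A+\kappa^2)^{-1}u$, Perron--Frobenius for the top eigenvalue, expansion of $\mu_0(\kappa)$, inversion) is internally consistent, and your constant matching $C_0=2/(\lambda_{cr}^2A)$ is arithmetically correct; but it contains one outright error and one gap that swallows the whole theorem. The error: $\mathcal K(\kappa)$ is \emph{not} compact, and this does not follow from (\ref{restr}). Since $V_{13}$ depends only on $r_1-r_3$, the weight $u$ fails to decay along the directions of $\mathbb{R}^6$ where $r_1-r_3$ stays bounded, and the diagonal Birman--Schwinger blocks $|v_{i3}|^{1/2}(H_0+v_{12}+\kappa^2)^{-1}|v_{i3}|^{1/2}$ carry nonempty essential spectrum contained in $[0,1/\tilde\lambda]$; this is exactly what the paper's Lemma~\ref{lemma:1} establishes (only the \emph{off-diagonal} blocks are shown to be compact, via (\ref{gogol})). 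Your Perron--Frobenius step can be repaired, but only by invoking $\lambda_{cr}<\tilde\lambda$, which places $1/\lambda_{cr}$ strictly above the essential spectrum so that the top of the spectrum is an isolated nondegenerate eigenvalue (Theorem XIII.43 in Vol.~4 of \cite{reed}); that is precisely how the paper argues, and without it your ``simple eigenvalue with a gap'' premise has no basis.

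The gap: the central claim $\mu_0(\kappa)=\mu_0(0)-A\kappa^2|\ln\kappa|+\hbox{o}(\kappa^2|\ln\kappa|)$ is the entire analytic content of the theorem, and your sketch does not prove it. (i) The expansion $(h_{12}+s)^{-1}=c\,s^{-1/2}|\psi_r\rangle\langle\psi_r|+O(1)$ holds between weighted spaces, and sandwiching with $u$ does not supply the needed weights: $u$ decays in $c_1x+c_2y$, not in $x$, and $\psi_r\notin L^2$, so even the boundedness of the ``singular part'' of $\mathcal K(0)$ requires a separate (Hardy-type) argument --- it is not Hilbert--Schmidt. (ii) The statement that ``the same estimate yields $\|\mathcal K(\kappa)-\mathcal K(0)\|=O(\kappa^2|\ln\kappa|)$'' is unjustified: a diagonal matrix element never bounds an operator norm, and the fiber multiplier $1/|p|-1/\sqrt{p^2+\kappa^2}$ is actually \emph{unbounded} (it grows like $1/|p|$ for $|p|\ll\kappa$), so norm-smallness must come from decay of the effective weights in $p$, which is exactly the marginal, logarithmically sensitive point. (iii) Your leading-term computation needs continuity and nonvanishing of $p\mapsto(\psi_r,g_p)$ at $p=0$, i.e. $L^1(dy)$-control of $\int\psi_r\,u\,\phi_0\,dx$; the finiteness and positivity of precisely such $L^1$ quantities is the technical heart of the matter (in the paper it occupies Lemmas~\ref{lemma:2}--\ref{lemma:4} and yields the constant $C_1$ of Remark~\ref{remark:3}). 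Note that the paper explicitly warns that the Klaus--Simon route via low-energy expansions of the Birman--Schwinger operator ``is not applicable here''; its actual proof avoids it entirely, using the Feynman--Hellmann identity (\ref{24.1:9}) together with the hard estimate of Theorem~\ref{th:main2} (that $|\ln|E(\lambda)||\,\{\||v_{13}|^{1/2}\psi_\lambda\|^2+\||v_{23}|^{1/2}\psi_\lambda\|^2\}\to C_0>0$), and then integrates the resulting differential inequalities. You candidly flag your second paragraph as the main obstacle; since all of the difficulty lives there, what you have is a plausible reduction of the theorem to an equivalent unproven statement, not a proof.
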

Before we proceed with the proof let us remark that 1) Eq.~(\ref{main}) is universal, meaning that up to a constant it does not depend on the details of pair interaction; 
2) the function $E(\lambda)$ at $\lambda=\lambda_{cr}$ cannot be Taylor expanded in powers of $(\lambda-\lambda_{cr})^\alpha$ for any $\alpha >0$; 
3) the three-body ground state energy in the 3-dimensional case has the same behavior near $\lambda_{cr}$ as the 2-body ground state energy in the 4-dimensional case 
(we do not have an explanation for this finding); 4) the constant $C_0$ depends on pair interactions and can be expressed through zero energy solutions of Birman-Schwinger operators; 
5) the method of the proof is different from \cite{klaus1,klaus2}: the method in \cite{klaus1,klaus2}, which uses the low energy expansions of the Birman-Schwinger operator, is 
not applicable here. The proof below hinges on Theorem~\ref{th:main2} in Section~\ref{sec:3}, whose technical proof is heavily based on the results and methods in \cite{3}.

\begin{proof}[Proof of Theorem~\ref{th:main}]
 For $\lambda \in (\lambda_{cr}, {\tilde \lambda} )$ there exists $\psi_\lambda \in
D(H_0)$, $\|\psi_\lambda\| = 1$  such that $H(\lambda) \psi_\lambda = E
(\lambda)\psi_\lambda$, 
besides we can assume that $\psi_\lambda >0$ because it is the ground state. $E (\lambda)$ is
smooth and monotone increasing on $(\lambda_{cr}, {\tilde \lambda})$. Using
perturbation theory \cite{reed,kato} we obtain 
\begin{equation}\label{24.1:9}
 -\frac{dE (\lambda)}{d\lambda} = \||v_{13}|^{1/2} \psi_\lambda\|^2 +
\||v_{23}|^{1/2} \psi_\lambda\|^2 . 
\end{equation}
 By Theorem~\ref{th:main2} for $\lambda$ close enough to $\lambda_{cr}$ there exists a constant $C_0 >0$ such that 
\begin{equation}
 -\frac{C_0+\varepsilon }{\ln (-E (\lambda))}  \geq -\frac{dE (\lambda)}{d\lambda} \geq -\frac{C_0-\varepsilon  }{\ln (-E (\lambda))} 
\end{equation}
for any given $\varepsilon >0$. 
The last inequality can be equivalently rewritten as 
\begin{equation}\label{24.1:3}
 C_0 + \varepsilon  \geq \frac{d}{d\lambda} \left( E(\lambda) \ln (-E(\lambda)) - E(\lambda)\right) \geq C_0 - \varepsilon . 
\end{equation}
Integrating (\ref{24.1:3}) we obtain 
\begin{equation}\label{24.1:4}
  (C_0 + \varepsilon) (\lambda-
\lambda_{cr}) \geq E (\lambda) \ln (-E (\lambda)) - E(\lambda) \geq (C_0 - \varepsilon) (\lambda-
\lambda_{cr}) . 
\end{equation}
Let us set 
\begin{equation}\label{24.1:6}
 E (\lambda) =-f(\lambda)(\lambda - \lambda_{cr}) ,  
\end{equation}
where $f (\lambda) >0$. From (\ref{24.1:4}) we get 
\begin{equation}\label{24.1:5}
  (C_0 + \varepsilon) \geq f (\lambda)[- \ln (-E (\lambda)) +1 ] \geq (C_0 - \varepsilon) . 
\end{equation} 
Using that $E(\lambda) \to 0$ from (\ref{24.1:5}) we conclude that 
$\lim_{\lambda \to \lambda_{cr}} f(\lambda ) = 0$. Again substituting (\ref{24.1:6}) into (\ref{24.1:5}) we obtain 
\begin{equation}
  (C_0 + \varepsilon) \geq -f (\lambda)\ln (\lambda - \lambda_{cr}) - [ f(\lambda) \ln (f(\lambda))  -f(\lambda) ] \geq (C_0 - \varepsilon) 
\end{equation} 
The term in square brackets in the last inequality goes to zero for $\lambda \to \lambda_{cr}$. Thus for $\lambda$ close to $\lambda_{cr}$ we have 
\begin{equation}\label{24.1:8}
  - \frac{C_0 + \varepsilon/2 }{\ln (\lambda - \lambda_{cr})} \geq f (\lambda) \geq - \frac{C_0 - \varepsilon/2 }{\ln (\lambda - \lambda_{cr})} . 
\end{equation} 
Now (\ref{24.1:9}) follows from (\ref{24.1:6}), (\ref{24.1:8}) since $\varepsilon >0 $ is arbitrarily small. \end{proof}
\begin{remark}\label{newremark}
 We should explicitly warn against the direct ``physicist's approach'', when one substitutes in (\ref{24.1:9}) instead of $\psi_\lambda$ the rhs of (\ref{3bh3}). 
Eq.~(\ref{3bh3}) defines $\psi_\lambda$ up to error terms, which go to zero in norm when the energy goes to zero. There is otherwise no control of these error 
terms and one cannot exclude the situation, where, for example, 
\begin{equation}\label{yasego}
 \psi_\lambda = \Phi_\lambda + \bigl|\ln |E(\lambda)|\bigr|^{-\frac 18} \exp(-|x|^2 -|y|^2)
\end{equation}
and $\Phi_\lambda$ equals the rhs of (\ref{3bh3}). If one substitutes (\ref{yasego}) into (\ref{24.1:9}) one would find that the resulting 
$E(\lambda)$ would be very different from the form stated in 
Theorem~\ref{th:main}. 
\end{remark}

Theorem~\ref{th:main} considers the case when the particles $\{1,2\}$ have a zero energy resonance. Now let us consider a more general situation and 
assume that the 3-particle system is described 
by the Hamiltonian (\ref{hami}), where for simplicity we require that $V_{ik} \leq 0$ are bounded and have a compact support. 
We still require that $\lambda < {\tilde \lambda}$, {\it i.e.} $\lambda$ takes the values for 
which the subsystems $\{1,3\}$ and $\{2,3\}$ have no bound states with negative energy and no zero energy resonances. However, we do not impose restrictions on the 
spectrum of the particles $\{1,2\}$, which means that this pair determines the energy of the dissociation threshold $E_{thr}$, that is 
\begin{equation}
 E_{thr} := \inf \sigma_{ess} (H(\lambda)) = \inf \sigma (H_0 + v_{12}) . 
\end{equation}
The critical coupling constant $\lambda_{cr}$ is the value of $\lambda$ for which the 3-body bound state, 
whose energy lies below $E_{thr}$, is about to be formed. Mathematically speaking
\begin{equation}
 \lambda_{cr} = \sup\{\lambda | \inf \sigma (H(\lambda)) = E_{thr}\} . 
\end{equation}
By the methods similar to \cite{oldstuff} one can prove that $\lambda_{cr} < {\tilde \lambda}$. 
\begin{theorem}\label{th:main1a}
One can distinguish 3 cases: (A) the pair $\{1,2\}$ has no negative energy bound states and no zero energy resonance; (B) the pair $\{1,2\}$ has no negative energy bound states 
but has a zero energy resonance; (C) the pair $\{1,2\}$ has at least one bound state with negative energy. Suppose that 
$E(\lambda):= \inf \sigma(H(\lambda))$, then for $\lambda \searrow \lambda_{cr}$ in each case one has
\begin{align}
&(A) \quad \quad E(\lambda) - E_{thr}=E(\lambda)= - c (\lambda-\lambda_{cr}) + \mbox{\textnormal{h.t.}} \nonumber\\
&(B) \quad \quad  E(\lambda) - E_{thr}=E(\lambda)= c (\lambda-\lambda_{cr})[\ln (\lambda-\lambda_{cr})]^{-1}+ \mbox{\textnormal{h.t.}} \nonumber\\
&(C) \quad \quad  E(\lambda)-E_{thr} = - c (\lambda-\lambda_{cr})^2 + \mbox{\textnormal{h.t.}} \nonumber
\end{align}
where $c > 0$ is a finite constant.  
\end{theorem}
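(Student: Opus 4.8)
The plan is to reduce all three cases to the Feynman--Hellmann identity already exploited in the proof of Theorem~\ref{th:main},
\begin{equation}\label{eq:fhgen}
 -\frac{dE(\lambda)}{d\lambda} = \||v_{13}|^{1/2}\psi_\lambda\|^2 + \||v_{23}|^{1/2}\psi_\lambda\|^2 =: g(\lambda) ,
\end{equation}
and to read off the asymptotics of $g(\lambda)$ as $\lambda\searrow\lambda_{cr}$ from the limiting behavior of the normalized ground state $\psi_\lambda$. Once the leading order of $g$ is known, the asymptotics of $E(\lambda)$ follows by integrating the resulting first-order ODE in $\lambda$, exactly as in the passage from Theorem~\ref{th:main2} to Theorem~\ref{th:main}. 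Case~(B) is precisely Theorem~\ref{th:main}: there $\{1,2\}$ carries a zero-energy resonance, $g(\lambda)\sim -C_0/\ln(-E(\lambda))$, and integration gives the logarithmic law. Thus only cases~(A) and~(C) require new input, and in each the real task is to quantify how fast $g(\lambda)$ tends to its limit.

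For case~(A) the dissociation threshold sits at $E_{thr}=\inf\sigma(H_0+v_{12})=0$. The decisive structural fact---to be imported from the threshold analysis of \cite{1,3}---is that, in the absence of any two-body bound state or zero-energy resonance, the zero-energy solution decays like $\rho^{-4}$ in the six-dimensional hyperradius $\rho=\sqrt{|x|^2+|y|^2}$ and is therefore square-integrable, so that zero is a genuine $L^2$ eigenvalue of $H(\lambda_{cr})$. This is in sharp contrast to the two-body situation, where the $\rho^{-1}$ threshold solution just fails to lie in $L^2$. I would first prove that $\psi_\lambda\to\psi_0$ in norm as $\lambda\searrow\lambda_{cr}$, with $\psi_0$ the normalized zero-energy eigenstate; compactness of the Birman--Schwinger kernel together with the monotonicity and smoothness of $E(\lambda)$ should yield this. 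It then follows from (\ref{eq:fhgen}) that $g(\lambda)\to c:=\||v_{13}|^{1/2}\psi_0\|^2+\||v_{23}|^{1/2}\psi_0\|^2$. That $c>0$ is seen by contradiction: if $c=0$ then $v_{i3}\psi_0=0$, and the eigenvalue equation forces $(H_0+v_{12})\psi_0=0$, impossible for a nonzero $\psi_0\in L^2$ since $\{1,2\}$ has neither a bound state nor a resonance. With $-dE/d\lambda\to c$ and $E(\lambda_{cr})=0$, integration gives $E(\lambda)=-c(\lambda-\lambda_{cr})+\text{h.t.}$

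For case~(C) the threshold $E_{thr}=E_{12}<0$ is fixed by the two-body bound state $\phi_{12}$, and the limiting state spreads only in the atom--dimer channel, particle~$3$ receding from the localized pair $\{1,2\}$. I would show that, up to a remainder vanishing in norm, $\psi_\lambda$ takes the product form $\phi_{12}\otimes F_\lambda$, where $F_\lambda$ solves an effective two-body problem in $\mathbb{R}^3$ with the short-range potential $V_{\mathrm{eff}}=\langle\phi_{12}|(v_{13}+v_{23})|\phi_{12}\rangle$ and behaves near threshold like the spreading zero-energy resonance $e^{-\kappa\rho}/\rho$ with $\kappa=\sqrt{E_{thr}-E(\lambda)}$. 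Here the effective problem is genuinely three-dimensional, so its threshold solution is a resonance rather than an eigenvalue---the mechanism behind the quadratic law. Inserting this into $g(\lambda)=-\langle\psi_\lambda|(v_{13}+v_{23})|\psi_\lambda\rangle$ and using $\|F_\lambda\|=1$ gives $g(\lambda)\sim c\sqrt{E_{thr}-E(\lambda)}$ for some $c>0$. With $u:=E_{thr}-E$ this is the autonomous equation $du/d\lambda=c\sqrt{u}$, $u(\lambda_{cr})=0$, whose solution $u=\tfrac{c^2}{4}(\lambda-\lambda_{cr})^2$ yields $E(\lambda)-E_{thr}=-\tfrac{c^2}{4}(\lambda-\lambda_{cr})^2+\text{h.t.}$

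The main obstacle is case~(C): rigorously justifying the atom--dimer reduction and controlling its remainder. As Remark~\ref{newremark} warns, a remainder that vanishes in norm may still corrupt the averages $\||v_{i3}|^{1/2}\psi_\lambda\|^2$, so one cannot simply substitute the product ansatz into (\ref{eq:fhgen}). The honest route is to prove the two-sided bounds $(c-\varepsilon)\sqrt{E_{thr}-E}\le g(\lambda)\le(c+\varepsilon)\sqrt{E_{thr}-E}$---the case-(C) analogue of Theorem~\ref{th:main2}---via the Faddeev-type channel decomposition and the resolvent estimates of \cite{3}, where essentially all the technical weight resides. Case~(A) is milder but still demands the norm convergence $\psi_\lambda\to\psi_0$ and the non-degeneracy $c>0$; its delicate step is excluding a slow-decay tail in the limiting state, precisely the place where the assumption that $\{1,2\}$ carries no zero-energy resonance is used.
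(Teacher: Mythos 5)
Your treatment of cases (A) and (B) matches the paper's proof in substance: case (B) is reduced to Theorem~\ref{th:main}, and case (A) is handled through the Feynman--Hellmann identity together with the norm convergence $\psi_\lambda \to \psi_0$ to the zero-energy eigenfunction of $H(\lambda_{cr})$ (which the paper imports from the proof of Theorem~2 in \cite{1} rather than re-deriving via the Birman--Schwinger kernel, as you propose), after which $g(\lambda)\to c\in(0,\infty)$ and integration finishes the argument. Your contradiction argument for $c>0$ is a nice explicit supplement to the paper, which merely asserts $c\in(0,\infty)$.

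The divergence, and the genuine gap, is case (C). The paper does not prove it at all: it invokes Theorem~3.2 of \cite{klaus2}, the Klaus--Simon result on two-cluster thresholds in $N$-body systems, which is exactly the statement $E(\lambda)-E_{thr} = -c(\lambda-\lambda_{cr})^2 + \textnormal{h.t.}$ for a threshold determined by a two-body bound state. You instead sketch a from-scratch proof via the atom--dimer reduction $\psi_\lambda \approx \phi_{12}\otimes F_\lambda$ and an effective three-dimensional two-body problem. The mechanism you describe is the correct one, and the ODE step $du/d\lambda = c\sqrt{u} \Rightarrow u = \tfrac{c^2}{4}(\lambda-\lambda_{cr})^2$ is sound, but the decisive analytic ingredient --- the two-sided bounds
\begin{equation}
 (c-\varepsilon)\sqrt{E_{thr}-E(\lambda)} \;\leq\; g(\lambda) \;\leq\; (c+\varepsilon)\sqrt{E_{thr}-E(\lambda)}
\end{equation}
--- is precisely what you defer (``where essentially all the technical weight resides'') and never establish. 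By your own (and the paper's, see Remark~\ref{newremark}) observation, a remainder vanishing in norm cannot simply be discarded when computing $\||v_{i3}|^{1/2}\psi_\lambda\|^2$, so the product ansatz plus ``small remainder'' is not a proof; it is the statement of the problem. In the structure of this paper, case (C) would require an analogue of the entire Theorem~\ref{th:main2} machinery adapted to the atom--dimer channel, i.e.\ a second paper's worth of estimates. The economical fix is the one the paper takes: cite \cite{klaus2}, whose Theorem~3.2 already contains case (C).
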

\begin{proof}
 Case (B) follows from Theorem~\ref{th:main} and case (C) was proved in Theorem~3.2 in \cite{klaus2}. In case (A) let $\psi_\lambda$ denote the eigenfunction of $H(\lambda)$, which corresponds to 
the eigenvalue $E(\lambda)$. As follows from the proof of Theorem 2 in \cite{1} $\psi_\lambda \to \psi_0$ in norm, where $\psi_0 \in D(H_0)$ is the eigenfunction corresponding to the 
zero eigenvalue of $H(\lambda_{cr})$. Thus $\||v_{13}|^{1/2}\psi_\lambda\|^2 + \||v_{23}|^{1/2}\psi_\lambda\|^2 \to c$, where $c \in (0,\infty)$ and, hence, 
 $dE/d\lambda \to -c$. The rest of the proof is trivial. 
\end{proof}

\section{Asymptotics of Potential Energy Terms}\label{sec:3}

Let $e(\mu)$ be the ground state energy of the Hamiltonian $h(\mu) = -\Delta_x + \mu v_{12} $ acting on $L^2 (\mathbb{R}^3)$, where $v_{12}$ is the operator of 
multiplication by $V_{12} (\alpha x)$ and $\mu \in \mathbb{R}_+$ is the coupling constant. 
Because the pair of particles $\{1,2\}$ has a 
zero energy resonance and no negative energy bound states the following expansion \cite{klaus1}
\begin{equation}
 e(\mu) = a^{-2} (\mu -1)^2 + \hbox{o} (\mu-1)
\end{equation}
 is true for $\mu \to + 1$. The positive constant $a$ in this expansion 
can be expressed through the zero energy solution of the Birman-Schwinger operator, namely \cite{klaus1}, 
\begin{equation}
 a = (4\pi )^{-1} \bigl\| |v_{12}|^{\frac 12} \phi_0\bigr\|_1^2 , 
\end{equation}
where $\phi_0 \in L^2 (\mathbb{R}^3)$ is the unique nonnegative and normalized solution of the equation 
\begin{equation}\label{1.16;1}
|v_{12}|^{\frac 12} \bigl[- \Delta_x + 0\bigr]^{-1} |v_{12}|^{\frac 12} \phi_0 = \phi_0, 
\end{equation}
see \cite{1} for details.  As we shall see (Remark~\ref{remark:3} below), the constant $C_0$ in (\ref{main}) can also be expressed through zero energy solutions of certain 
Birman-Schwinger operators.

The aim of this section is to prove 
\begin{theorem}\label{th:main2}
Suppose that the interaction between 
the particles $\{1,2\}$ is tuned to make them have a zero energy resonance 
and no negative energy bound states. Let $\psi_\lambda$ for $\lambda \in (\lambda_{cr}, \tilde \lambda)$ be the ground state wave function of $H(\lambda)$ 
defined in (\ref{hami}), which corresponds to the ground state energy $E(\lambda)$. 
Then there exists $C_0 \in (0, \infty) $ such that 
\begin{equation}\label{1.24:152}
\lim_{\lambda \to \lambda_{cr} + 0} \bigl|\ln |E(\lambda)|\bigr| \left\{ \bigl\||v_{13}|^{\frac12}\psi_\lambda \bigr\|^2 +  \bigl\||v_{23}|^{\frac12}
\psi_\lambda \bigr\|^2  \right\} = C_0 . 
\end{equation}
\end{theorem}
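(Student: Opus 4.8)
The plan is to avoid substituting the large-distance form (\ref{3bh3}) into the averages and instead to work directly with the weighted form factors $\xi_0:=|v_{12}|^{1/2}\psi_\lambda$ and $\xi_i:=|v_{i3}|^{1/2}\psi_\lambda$ ($i=1,2$), which are the quantities appearing in (\ref{1.24:152}) and are the natural unknowns of the Birman--Schwinger/Faddeev setup of \cite{3}. Writing $\kappa:=|E(\lambda)|^{1/2}$ and $v_{ik}=-|v_{ik}|$, the eigenvalue equation reads $\psi_\lambda=(H_0+\kappa^2)^{-1}\bigl(|v_{12}|+\lambda|v_{13}|+\lambda|v_{23}|\bigr)\psi_\lambda$, and multiplying by $|v_{12}|^{1/2},|v_{13}|^{1/2},|v_{23}|^{1/2}$ closes a linear system for $(\xi_0,\xi_1,\xi_2)$ built from the Birman--Schwinger kernels $K_{ab}(\kappa)=|v_a|^{1/2}(H_0+\kappa^2)^{-1}|v_b|^{1/2}$, where $|v_0|:=|v_{12}|$, $|v_1|:=|v_{13}|$, $|v_2|:=|v_{23}|$. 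The goal becomes $|\ln|E||\,(\|\xi_1\|^2+\|\xi_2\|^2)\to C_0$; equivalently, that $|\ln|E||^{1/2}\xi_i$ converges in $L^2$ to $|v_{i3}|^{1/2}\Psi_0$, where $\Psi_0$ is the (non-$L^2$) zero-energy solution, giving $C_0=\||v_{13}|^{1/2}\Psi_0\|^2+\||v_{23}|^{1/2}\Psi_0\|^2$.

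The logarithm and the amplitude both originate from the resonance in the $\{1,2\}$ channel. After a partial Fourier transform in the Jacobi variable $y$, the block $K_{00}(\kappa)$ acts fiberwise as the two-body operator $|v_{12}|^{1/2}(-\Delta_x+p^2+\kappa^2)^{-1}|v_{12}|^{1/2}$ at fixed $y$-momentum $p$, whose top eigenvalue satisfies $\mu_0(p,\kappa)=1-a\sqrt{p^2+\kappa^2}+\hbox{o}(\sqrt{p^2+\kappa^2})$ with eigenfunction $\phi_0$ of (\ref{1.16;1}), while the remainder of its spectrum and the subcritical blocks $K_{11},K_{22}$ (ensured by $\lambda<\tilde\lambda$) stay bounded away from $1$. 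I would decompose $\xi_0=\phi_0\otimes g_\lambda+\xi_0^\perp$, with $\xi_0^\perp$ orthogonal to $\phi_0$ in $x$, solve for the regular part $\xi_0^\perp$ and for $\xi_1,\xi_2$, and reduce the system to a single effective equation for the scalar $y$-profile $g_\lambda$ in which the near-resonant factor $(1-\mu_0(p,\kappa))^{-1}\sim(a\sqrt{p^2+\kappa^2})^{-1}$ governs the spread of $g_\lambda$. Imposing $\|\psi_\lambda\|=1$ fixes the amplitude: the normalization integral is dominated by the scaling window $1\ll|x|,|y|\ll\kappa^{-1}$ and yields the factor $|\ln|E||$ — the same mechanism that produces the prefactor in (\ref{3bh3}) — so that $g_\lambda$, and with it each $\xi_i$, carries overall weight of order $|\ln|E||^{-1/2}$.

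Feeding $g_\lambda$ back through $\xi_i=K_{i0}(\kappa)\xi_0+\lambda K_{i1}(\kappa)\xi_1+\lambda K_{i2}(\kappa)\xi_2$, the leading term is $K_{i0}(0)$ applied to the resonant part of $\xi_0$. Because $v_{i3}$ is short range, $\|\xi_i\|^2$ samples $\psi_\lambda$ only where particles $i$ and $3$ are close: in the Jacobi frame $(\eta,\zeta)$ adapted to the pair $\{i,3\}$, the relative variable $\eta$ is confined to the support of $V_{i3}$ while $\zeta$ ranges over $\mathbb{R}^3$, and the rescaled wave function approaches a fixed $\zeta$-profile decaying like $|\zeta|^{-3}$ along this frame, so the $\zeta$-integral converges and no second logarithm appears. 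This gives $\|\xi_i\|^2\sim C_{i3}|\ln|E||^{-1}$ with finite $C_{i3}=\||v_{i3}|^{1/2}\Psi_0\|^2$ expressible through the zero-energy Birman--Schwinger solutions, whence $C_0=C_{13}+C_{23}$.

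The main obstacle is exactly the point of Remark~\ref{newremark}: norm convergence $\psi_\lambda\to\Phi_\lambda$ does not suffice, since an error small in $L^2$ but concentrated in the support of $V_{i3}$, as in (\ref{yasego}), would swamp the $|\ln|E||^{-1}$ leading term. The substance of the theorem is therefore a weighted remainder bound, $\bigl\||v_{i3}|^{1/2}(\psi_\lambda-\Phi_\lambda)\bigr\|=\hbox{o}\bigl(|\ln|E||^{-1/2}\bigr)$, which cannot be extracted from (\ref{3bh3}) alone. Here the estimates of \cite{3} and the hypotheses (\ref{restr})--(\ref{restr3}) are indispensable: the polynomial weights $(1+|r|)^{2\delta}$ and the $L^2$-bound on $V_{i3}$ provide the mapping properties of the kernels $K_{ab}(\kappa)$ on weighted spaces needed to control $\xi_0^\perp$ and all off-diagonal corrections uniformly as $\kappa\to0$, and so to separate the resonant contribution from the remainder in the $|v_{i3}|^{1/2}$-weighted norm rather than merely in $L^2$.
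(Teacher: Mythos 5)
Your heuristic skeleton is sound --- the fiberwise resonance expansion $\mu_0(p,\kappa)=1-a\sqrt{p^2+\kappa^2}+\hbox{o}\bigl(\sqrt{p^2+\kappa^2}\bigr)$, the origin of the logarithm in the normalization of $\psi_\lambda$, the $|\zeta|^{-3}$ decay that prevents a second logarithm --- and you correctly diagnose, in line with Remark~\ref{newremark}, that the norm convergence (\ref{3bh3}) cannot simply be inserted into the averages. The gap is that your argument stops exactly where the theorem begins: you reduce everything to the weighted remainder bound $\bigl\||v_{i3}|^{1/2}(\psi_\lambda-\Phi_\lambda)\bigr\|=\hbox{o}\bigl(|\ln|E||^{-1/2}\bigr)$ and then assert that ``the estimates of \cite{3}'' together with (\ref{restr})--(\ref{restr3}) supply the needed mapping properties on weighted spaces. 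No such estimates are constructed, and the appeal is circular: the conclusions of \cite{3} are unweighted $L^2$ statements, which is precisely what you (and the paper) declare insufficient, while the bound you postulate is, modulo an explicit computation with $\Phi_\lambda$, a restatement of (\ref{1.24:152}) itself. Two further points are left hanging: the existence, normalization and nontriviality of your zero-energy solution $\Psi_0$ (zero is \emph{not} an eigenvalue of $H(\lambda_{cr})$, so $\Psi_0$ is not $L^2$ and $|v_{i3}|^{1/2}\Psi_0\in L^2$, $|v_{i3}|^{1/2}\Psi_0\neq 0$ must be proved), and hence the strict positivity $C_0>0$, which nothing in your outline delivers.

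It is instructive that the paper's proof never establishes your weighted bound; it is engineered to avoid it by splitting the claim into two softer statements. First, a \emph{direction} statement (Lemma~\ref{lemma:1}): the normalized form factors $\varphi_n^{(i)}=M_n^{-1}|v_{i3}|^{1/2}\psi_n$ converge in norm, because $(\varphi_n^{(1)},\varphi_n^{(2)})$ is the positive normalized eigenvector of the $2\times2$ Birman--Schwinger operator $\mathcal{K}(k_n^2)$ built on the resolvent $(H_0+v_{12}+z)^{-1}$ --- so the resonant $\{1,2\}$ channel sits \emph{inside} the resolvent and needs no separate fiber analysis --- and one only needs compactness of the off-diagonal blocks, norm continuity at $z=0$, positivity preservation, and non-degeneracy of the top eigenvalue of $\mathcal{K}(0)$. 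Second, a \emph{rate} statement: instead of estimating $\psi_\lambda-\Phi_\lambda$ in a weighted norm, the paper compares $M_n$ with $|\hat g_n(0)|$ by proving that $|\hat g_n(0)|/M_n$ converges to some $C_1>0$ (the $L^1$-convergence machinery of Lemmas~\ref{lemma:2}--\ref{lemma:4} applied to (\ref{1.24:17})), and then invokes the already-known asymptotics (\ref{1.16;3}), i.e.\ $|\hat g_n(0)|\,|\ln k_n|^{1/2}\to\sqrt{2}a/R(0)$ from Eq.~(90) of \cite{3}, to convert that ratio into (\ref{1.24:15}); positivity of $\varphi_\infty^{(1)},\varphi_\infty^{(2)}$ together with $\|\varphi_\infty^{(1)}\|^2+\|\varphi_\infty^{(2)}\|^2=1$ then gives $C_1>0$ and hence $C_0\in(0,\infty)$. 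To salvage your route you would have to actually prove the uniform weighted estimates you invoke --- in effect redoing the analysis of \cite{3} in $|v_{i3}|^{1/2}$-weighted norms --- which is substantially harder than the two steps above and is nowhere sketched in your proposal.
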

(The same constant $C_0$ is used in Theorem~\ref{th:main}). We shall extensively use the results from \cite{3}, therefore it is convenient 
to pass to sequences. Let $\lambda_n \in (\lambda_{cr}, {\tilde \lambda})$ be any sequence such that $\lambda_n \to \lambda_{cr}$ and $\psi_n\equiv \psi_{\lambda_n}$. 
Instead of (\ref{1.24:152}) it suffices to prove that 
\begin{equation}\label{1.24:15}
\lim_{n \to \infty} |\ln k_n| \left\{ \bigl\||v_{13}|^{\frac12}\psi_n\bigr\|^2 +  \bigl\||v_{23}|^{\frac12}
\psi_n\bigr\|^2  \right\} = C_0/2 ,  
\end{equation}
where $k_n := |E(\lambda_n)|^\frac 12$. We need only to prove that the limit on the lhs of (\ref{1.24:15}) exists and is positive. 
Note that all requirements R1-R3 in \cite{3} are satisfied and the sequence $\psi_n$ totally spreads (see Sec. Theorems~1,3 in \cite{1} for the proof). 

Before we proceed with the proof let us introduce additional notations. Let $\mathcal{F}_{12}$ and $\mathbb{P}_0$ denote the partial Fourier transform and the 
projection operator, which act on $f(x,y)$ as follows 
\begin{gather}
\hat f(x,p_y)  =  \mathcal{F}_{12} f(x,y) = \frac 1{(2 \pi )^{3/2}} \int d^3 y
\; e^{-ip_y \cdot \; y} f(x,y) ,\label{ay5} \\
[\mathbb{P}_0 f ] (x,y) = \phi_0 (x) \int f(x',y) \phi_0 (x') d^3 x' , 
\end{gather}
and where $\phi_0$ is defined in (\ref{1.16;1}). 
For a shorter notation let us denote 
\begin{equation}
M_n :=  \left\{ \bigl\||v_{13}|^{\frac12}\psi_n\bigr\|^2 +  \bigl\||v_{23}|^{\frac12}
\psi_n\bigr\|^2  \right\}^{1/2} . 
\end{equation}
Similar to Eqs. (39)-(40) in \cite{3} we introduce the operator function
\begin{equation}\label{b12}
 \tilde B_{12} (k_n) := \mathcal{F}^{-1}_{12} \xi_n (p_y)
\mathcal{F}_{12} ,
\end{equation}
where
\begin{equation}\label{tail}
    \xi_n (p_y) := \left\{ \begin{array}{ll}
    |p_y|^{\delta/8} + (k_n)^{\delta/8}  & \quad \mathrm{if} \;\; |p_y|
\leq 1  \\
    1 + (k_n)^{\delta/8} & \quad \mathrm{if} \;\; |p_y| \geq 1 . \\
    \end{array}
    \right.
\end{equation}
\begin{lemma}\label{lemma:1}
The sequences $\varphi^{(1)}_n := M_n^{-1}|v_{13}|^{\frac 12} \psi_n$ and $\varphi^{(2)}_n := M_n^{-1}|v_{23}|^{\frac 12} \psi_n$, where $\psi_n$ 
is defined in Theorem~\ref{th:main2}, converge in norm. 
The sequence $\varphi^{(3)}_n := M_n^{-1}\tilde B_{12} (k_n)|v_{12}|^{\frac 12} \psi_n$ is uniformly norm-bounded. 
\end{lemma}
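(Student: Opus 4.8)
The plan is to turn the eigenvalue equation into a coupled Birman--Schwinger system for three ``channel'' vectors and to analyze it as $k_n \to 0$. Writing $v_{ij} = -|v_{ij}|^{1/2}|v_{ij}|^{1/2}$ and using $(H_0+k_n^2)\psi_n = (|v_{12}| + \lambda_n|v_{13}| + \lambda_n|v_{23}|)\psi_n$, I would set $g_n^{(0)} := |v_{12}|^{1/2}\psi_n$, $g_n^{(1)} := |v_{13}|^{1/2}\psi_n$, $g_n^{(2)} := |v_{23}|^{1/2}\psi_n$ and multiply the resolvent identity on the left by each $|v_{\bullet}|^{1/2}$ to obtain a closed system $g_n^{(i)} = \sum_j c_j\,K_{ij}(k_n)\,g_n^{(j)}$, where $c_0=1$, $c_1=c_2=\lambda_n$ and $K_{ij}(k)=|v_{\bullet_i}|^{1/2}(H_0+k^2)^{-1}|v_{\bullet_j}|^{1/2}$. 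Dividing by $M_n$ gives a system for $\varphi_n^{(1)},\varphi_n^{(2)}$ and $\tilde g_n^{(0)} := M_n^{-1}g_n^{(0)}$. By construction $\|\varphi_n^{(1)}\|^2+\|\varphi_n^{(2)}\|^2=1$, so these two sequences are automatically bounded and the genuine content of the first assertion is norm-\emph{convergence}.

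The non-resonant channels are the soft part. Under (\ref{restr}) the operators $K_{ij}(k_n)$ with $i,j\in\{1,2\}$ are Hilbert--Schmidt and converge in Hilbert--Schmidt norm to compact limits $K_{ij}(0)$, by the standard low-energy analysis of the free resolvent in $\mathbb{R}^6$. For the coupling term $K_{i0}(k_n)\tilde g_n^{(0)}$ ($i=1,2$) feeding the resonant channel into channels $1,2$, I would use that $\tilde B_{12}(k_n)$ is a Fourier multiplier in $y$ and hence commutes with multiplication by $|v_{12}|^{1/2}$ (a function of $x$ only). Inserting $\tilde B_{12}(k_n)^{-1}\tilde B_{12}(k_n)$ yields $K_{i0}(k_n)\tilde g_n^{(0)} = \mathcal{A}_n^{(i)}\varphi_n^{(3)}$ with $\mathcal{A}_n^{(i)} := |v_{i3}|^{1/2}(H_0+k_n^2)^{-1}\tilde B_{12}(k_n)^{-1}|v_{12}|^{1/2}$; since $\xi_n(p_y)^{-1}\lesssim |p_y|^{-\delta/8}$ near $p_y=0$ is only a mild, locally integrable singularity, $\mathcal{A}_n^{(i)}$ is again Hilbert--Schmidt and norm-convergent. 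Thus every operator entering the first two equations converges in norm to a compact limit, and the one remaining ingredient is control of $\varphi_n^{(3)}$.

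The uniform bound on $\varphi_n^{(3)}$ is the main obstacle and the point where the methods of \cite{3} are indispensable. Because $\xi_n\le 1+k_n^{\delta/8}$, the operator $\tilde B_{12}(k_n)$ has norm at most $2$, so $\|\varphi_n^{(3)}\|\le 2M_n^{-1}\|g_n^{(0)}\|$; but $M_n\to 0$ (by (\ref{24.1:9}) $M_n^2=-dE/d\lambda$ vanishes as the state spreads), while $\|g_n^{(0)}\|=\||v_{12}|^{1/2}\psi_n\|$ need not tend to zero, so the crude bound fails and the damping of $\tilde B_{12}(k_n)$ at small $p_y$ must be used quantitatively. The mechanism is that $g_n^{(0)}$ concentrates at small $p_y$ — this is the spreading (\ref{14.1;1}) — precisely where $\xi_n(p_y)\approx |p_y|^{\delta/8}+k_n^{\delta/8}$ is small; decomposing $\tilde g_n^{(0)}=\mathbb{P}_0\tilde g_n^{(0)}+(1-\mathbb{P}_0)\tilde g_n^{(0)}$, the resonant direction $\mathbb{P}_0\tilde g_n^{(0)}$ carries the entire singular, $|\ln k_n|$-producing low-$p_y$ profile, and $\tilde B_{12}(k_n)$ is tailored exactly to offset it. The inequality $\|\varphi_n^{(3)}\|\le C$ then follows from the low-energy resolvent estimates and the weighted bounds on the resonant profile established in \cite{3}; I regard this as the hard technical step.

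With $\varphi_n^{(3)}$ bounded, the first two equations read $\varphi_n^{(i)}=\mathcal{A}_n^{(i)}\varphi_n^{(3)}+\lambda_n K_{i1}(k_n)\varphi_n^{(1)}+\lambda_n K_{i2}(k_n)\varphi_n^{(2)}$, a fixed-point relation whose operators converge in norm to compact limits and whose inputs are bounded. Hence the right-hand sides form a precompact set, so $\varphi_n^{(1)},\varphi_n^{(2)}$ are precompact in norm and every subsequence has a norm-convergent sub-subsequence solving the limiting ($k=0$) system at coupling $\lambda_{cr}$. To promote this to convergence of the full sequence I would prove the limit is unique: the inhomogeneous term $\mathcal{A}^{(i)}(0)\varphi_*^{(3)}$ is pinned down because the convergence of the normalized resonant profile established in \cite{3} (the same analysis that produces (\ref{3bh3})) fixes the relevant weak limit $\varphi_*^{(3)}$ independently of the subsequence, while $I-\lambda_{cr}\bigl(K_{ij}(0)\bigr)_{i,j\in\{1,2\}}$ is invertible on the non-resonant block since $\lambda_{cr}<\tilde\lambda$ and zero is not an eigenvalue of $H(\lambda_{cr})$. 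Uniqueness of the limit then gives norm-convergence of $\varphi_n^{(1)}$ and $\varphi_n^{(2)}$ along the whole sequence, completing the proof.
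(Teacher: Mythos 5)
Your reduction of the eigenvalue equation to a coupled Birman--Schwinger system is in the right spirit, but the analytic pillars it rests on are false as stated, and the one genuinely hard estimate is assumed rather than proved. The compactness input fails first: in your three-channel system built on the \emph{free} resolvent, the diagonal entries $K_{11}(k_n)=|v_{13}|^{\frac12}(H_0+k_n^2)^{-1}|v_{13}|^{\frac12}$ and $K_{22}(k_n)$ are not compact, let alone Hilbert--Schmidt. Conjugating with the partial Fourier transform in the spectator coordinate exhibits $K_{11}$ as a direct integral over $p_\zeta$ of two-body Birman--Schwinger operators, and a nonzero fibered operator of this kind is never compact. (Moreover, no cross-channel sandwich $|v_a|^{\frac12}(H_0+k^2)^{-1}|v_b|^{\frac12}$ is Hilbert--Schmidt on $L^2(\mathbb{R}^6)$: its squared HS norm equals a positive multiple of $\|V_a\|_1\|V_b\|_1\int d^6p\,(p^2+k^2)^{-2}=\infty$, an ultraviolet divergence; only compactness is available. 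Your claim that $\mathcal{A}_n^{(i)}$ is HS and norm-convergent has the same problem, plus the fact that the multipliers $\xi_n^{-1}$ blow up near $p_y=0$ and do not converge uniformly as $k_n\to 0$.) Since your concluding step --- ``the right-hand sides form a precompact set, so $\varphi_n^{(1)},\varphi_n^{(2)}$ are precompact in norm'' --- needs every operator in the system to converge in norm to a compact limit, it collapses at the diagonal entries. It could be repaired by moving the diagonal terms to the left and inverting $1-\lambda_n K_{ii}(k_n)$, whose norm is at most $\lambda_n/\tilde\lambda<1$ uniformly by (\ref{tildelambda}), but you do not do this. The paper sidesteps the issue structurally: it builds the $2\times 2$ matrix (\ref{blacblac}) on the \emph{dressed} resolvent $(H_0+v_{12}+z)^{-1}$ (its sandwiched norm limit as $z\to+0$ is supplied by \cite{4}), so the resonant channel never enters the system at all, and compactness is needed only for the off-diagonal entries, and only to locate $\sigma_{ess}(\mathcal{K}(z))$ inside $[0,1/\tilde\lambda]$.

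Second, the uniform bound on $\varphi^{(3)}_n$ is itself one of the two assertions of Lemma~\ref{lemma:1}, yet you import it as a black box from \cite{3}; and the mechanism you describe --- that the spreading (\ref{14.1;1}) concentrates $g_n^{(0)}$ at small $p_y$ exactly where $\xi_n$ damps --- is not the one the proof actually uses. The paper derives the bound concretely: $\varphi^{(3)}_n$ is expressed through $\varphi^{(1)}_n,\varphi^{(2)}_n$ via Eq.~(67) of \cite{3}, a momentum cutoff isolates the dangerous region, Lemma~11 of \cite{1} expands $\{1-|v_{12}|^{\frac12}(H_0+k_n^2)^{-1}|v_{12}|^{\frac12}\}^{-1}$ near the resonance into $a^{-1}\mathbb{P}_0(\cdots)(p_y^2+k_n^2)^{-\frac12}$ plus $\mathcal{O}(1)$, and the resulting explicit singular factor is killed by $\xi_n$ in the Hilbert--Schmidt estimate (\ref{1.24:14}), where $\int_{|p_y|\le\rho_0}\xi_n^2(p_y)(p_y^2+k_n^2)^{-3/2}d^3p_y$ is uniformly bounded because $\delta>0$. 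Third, your passage from subsequential limits to convergence of the whole sequence is unsupported: \cite{3} controls objects normalized by $\|\psi_n\|=1$, not by $M_n^{-1}$ (and $M_n\to0$), so it cannot ``fix the relevant weak limit $\varphi_*^{(3)}$''; and invertibility of $I-\lambda_{cr}K(0)$ on the non-resonant block does not follow from zero not being an eigenvalue of $H(\lambda_{cr})$ --- it concerns the operator $H_0+\lambda_{cr}(v_{13}+v_{23})$ with $v_{12}$ removed. The paper needs no uniqueness argument at all: since $\mathcal{K}(z)$ is positivity preserving and $(\varphi^{(1)}_n,\varphi^{(2)}_n)$ is a nonnegative normalized eigenvector, $\lambda_n^{-1}=\|\mathcal{K}(k_n^2)\|$ is the maximal eigenvalue (Theorem~XIII.43 in Vol.~4 of \cite{reed}); it is isolated and nondegenerate, survives the norm limit $\mathcal{K}(k_n^2)\to\mathcal{K}(0)$, and perturbation of the spectral projection then yields norm convergence of the full sequences $\varphi^{(1)}_n,\varphi^{(2)}_n$. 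This Perron--Frobenius step, combined with dressing the resolvent by $v_{12}$, is the key idea your proposal is missing.
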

\begin{proof}
From the Schr\"odinger equation for $\psi_n$ it follows that 
\begin{equation}
 \lambda_n^{-1} \left( \begin{array}{c}
\varphi^{(1)}_n\\
\varphi^{(2)}_n
\end{array}\right) 
=\mathcal{K} (k_n^2) \left( \begin{array}{c}
\varphi^{(1)}_n\\
\varphi^{(2)}_n 
\end{array} \right) , 
\end{equation}
where $\mathcal{K}(z)$ for $z>0$ is a bounded operator on $L^2 (\mathbb{R}^6) \oplus L^2 (\mathbb{R}^6)$ defined as 
\begin{equation}\label{blacblac}
  \mathcal{K}(z) :=  \left( \begin{array}{cc}
|v_{13}|^{\frac 12} \bigl( H_0 + v_{12} + z\bigr)^{-1} |v_{13}|^{\frac 12} &|v_{13}|^{\frac 12} \bigl( H_0 + v_{12} + z\bigr)^{-1} |v_{23}|^{\frac 12}\\
|v_{23}|^{\frac 12} \bigl( H_0 + v_{12} + z\bigr)^{-1} |v_{13}|^{\frac 12} &|v_{23}|^{\frac 12} \bigl( H_0 + v_{12} + z\bigr)^{-1} |v_{23}|^{\frac 12}  
\end{array} \right) .
\end{equation}
Like in Sec.~II in \cite{4} one proves that $\mathcal{K}(z)$ for $z\to +0$ has a norm limit $\mathcal{K}(0)$, besides $\mathcal{K}(z)$ for $z\geq 0$ is a positivity 
preserving self-adjoint operator. Let us show that the off-diagonal terms in (\ref{blacblac}) are compact operators on $L^2 (\mathbb{R}^6)$. By the resolvent identity 
\begin{gather}
 |v_{13}|^{\frac 12} \bigl( H_0 + v_{12} + z\bigr)^{-1} |v_{23}|^{\frac 12} =  |v_{13}|^{\frac 12} \bigl( H_0 + z\bigr)^{-1} |v_{23}|^{\frac 12} \nonumber\\
+ \left[|v_{13}|^{\frac 12} \bigl( H_0 + z\bigr)^{-1} |v_{12}|^{\frac 12} \right] |v_{12}|^{\frac 12}  \bigl( H_0 + v_{12} + z\bigr)^{-1} |v_{23}|^{\frac 12} = \left[ |v_{13}|^{\frac 12} \bigl( H_0 + z\bigr)^{-1} |v_{23}|^{\frac 12}  \right] \nonumber\\
+ \left[|v_{13}|^{\frac 12} \bigl( H_0 + z\bigr)^{-1} |v_{12}|^{\frac 12} \right] \left[|v_{12}|^{\frac 12}  \bigl( H_0 + z\bigr)^{-1} |v_{23}|^{\frac 12} \right] \nonumber\\
- \left[|v_{13}|^{\frac 12} \bigl( H_0 + z\bigr)^{-1} |v_{12}|^{\frac 12} \right] \left\{|v_{12}|^{\frac 12} \bigl( H_0 + v_{12}+z\bigr)^{-1}  |v_{12}|^{\frac 12}\right\}
\left[|v_{12}|^{\frac 12} \bigl( H_0 + z\bigr)^{-1} |v_{23}|^{\frac 12}  \right]. \label{gogol}
\end{gather}
On the rhs of (\ref{gogol}) all operators in square brackets are Hilbert-Schmidt (the result of this sort goes back at least to \cite{ginibre}, see also \cite{1} for the 
proof in present notations). 
Since the operator in curly brackets is bounded for $z>0$ the lhs of (\ref{gogol}) is a 
compact operator. Thus by Weyl's criterion \cite{teschl,reed} 
\begin{equation}
 \sigma_{ess} \bigl( \mathcal{K}(z) \bigr) = \sigma_{ess} \bigl( \mathcal{K}_{11}(z) \bigr) \cup \sigma_{ess} \bigl( \mathcal{K}_{22}(z) \bigr)
\subseteq [0, 1/{\tilde \lambda}] , 
\end{equation}
where ${\tilde \lambda}$ was defined in (\ref{tildelambda}). $(\varphi^{(1)}_n , 
\varphi^{(2)}_n )$ 
is a normalized eigenvector of $\mathcal{K}(k_n^2)$ corresponding to the eigenvalue $ \lambda_n^{-1}$. Due to the location of the essential spectrum $\|\mathcal{K}(k_n^2)\|$ 
equals the maximal eigenvalue of $\mathcal{K}(k_n^2)$. Because $\varphi^{(1)}_n , \varphi^{(2)}_n \geq 0$ we conclude due to the positivity preserving property that 
$\|\mathcal{K}(k_n^2)\| = \lambda_n^{-1}$ (see Theorem XIII.43 in Vol. 4 of \cite{reed}). Therefore, due to the norm convergence $\lambda_{cr}^{-1} = \|\mathcal{K}(0)\| $ is the maximal 
eigenvalue of $\mathcal{K}(0) $, which is isolated and non-degenerate. Let $(\varphi^{(1)}_\infty ,  \varphi^{(2)}_\infty )$ with $\varphi^{(1)}_\infty, \varphi^{(2)}_\infty \geq 0$ 
be the eigenvector of $\mathcal{K}(0) $, which corresponds to 
$\lambda_{cr}^{-1}$. Again by the norm convergence of $\mathcal{K}(k_n^2)$ we have that $\varphi^{(1)}_n \to \varphi^{(1)}_\infty$ and 
$\varphi^{(2)}_n \to \varphi^{(2)}_\infty$ in norm. 

To prove that $\sup_n \|\varphi^{(3)}_n\| < \infty$ note that by Eq. (67) in \cite{3} 
\begin{gather}
  \varphi^{(3)}_n =  \lambda_n \Bigl\{1- |v_{12}|^{\frac 12} \bigl(H_0 +
k_n^2\bigr)^{-1}|v_{12}|^{\frac 12} \Bigr\}^{-1}
|v_{12}|^{\frac 12} \tilde B_{12} (k_n) \bigl[H_0 + k_n^2\bigr]^{-1}|v_{13}|^{\frac 12}\varphi^{(1)}_n \nonumber\\
 + \lambda_n \Bigl\{1- |v_{12}|^{\frac 12} \bigl(H_0 +
k_n^2\bigr)^{-1}|v_{12}|^{\frac 12} \Bigr\}^{-1}
|v_{12}|^{\frac 12} \tilde B_{12} (k_n) \bigl[H_0 + k_n^2\bigr]^{-1} |v_{23}|^{\frac 12}\varphi^{(2)}_n . 
\end{gather}
Without loosing generality it suffices to prove that 
\begin{equation}
  \varphi^{(4)}_n := \Bigl\{1- |v_{12}|^{\frac 12} \bigl(H_0 +
k_n^2\bigr)^{-1}|v_{12}|^{\frac 12} \Bigr\}^{-1}
|v_{12}|^{\frac 12} \tilde B_{12} (k_n) \bigl[H_0 + k_n^2\bigr]^{-1}|v_{13}|^{\frac 12}\varphi^{(1)}_n \label{29.12:1}
\end{equation}
is uniformly norm-bounded. Denoting $\hat  \varphi^{(4)}_n \equiv \mathcal{F}_{12} \varphi^{(4)}_n $ we get 
\begin{equation}
 \hat  \varphi^{(4)}_n = \chi_{[0, \rho_0]} \Bigl( \sqrt{p_y^2 +k_n^2}\Bigr) \hat  \varphi^{(4)}_n + \chi_{(\rho_0 , \infty)} \Bigl( \sqrt{p_y^2 +k_n^2}\Bigr) \hat  \varphi^{(4)}_n = 
\chi_{[0, \rho_0]} \Bigl( \sqrt{p_y^2 +k_n^2}\Bigr) \hat  \varphi^{(4)}_n + \mathcal{O}(1), 
\end{equation}
where $\mathcal{O}(1)$ denotes the terms, which are uniformly norm-bounded. Here $\rho_0 >0$ is a fixed cutoff parameter. 
Following Lemma~11 in \cite{1} (the value of $\rho_0$ is also defined there) we can expand the operator in curly brackets in (\ref{29.12:1}). 
This expansion gives (see Eqs. (73)-(74) in \cite{3})
\begin{gather}
\chi_{[0, \rho_0]} \Bigl( \sqrt{p_y^2 +k_n^2}\Bigr) \hat  \varphi^{(4)}_n =\chi_{[0, \rho_0]} \Bigl( \sqrt{p_y^2 +k_n^2}\Bigr) 
a^{-1} \mathbb{P}_0 |v_{12}|^{\frac 12}  \nonumber\\
\times \bigl(|p_y|^2 + k_n^2\bigr)^{-\frac 12}\xi_n (p_y) \bigl[-\Delta_x + p_y^2 + k_n^2\bigr]^{-1}\widehat{|v_{13}|^{\frac 12}} \hat \varphi^{(1)}_n + \mathcal{O}(1) , 
\end{gather}
where $\widehat{|v_{13}|^{\frac 12}} := \mathcal{F}_{12} |v_{13}|^{\frac 12} \mathcal{F}^{-1}_{12}$. Thus 
\begin{equation}\label{1.24:12}
 \|\varphi^{(4)}_n\| \leq a^{-1} \left\| |v_{12}|^{\frac 12} \chi_{[0, \rho_0]} ( |p_y|)  \bigl(|p_y|^2 + k_n^2\bigr)^{-\frac 12} \xi_n (p_y) \bigl[-\Delta_x + p_y^2 + k_n^2\bigr]^{-1}\widehat{|v_{13}|^{\frac 12}}\right\| + \mathcal{O}(1)
\end{equation}
It remains to prove that the operator norm on the right hand side (rhs) of (\ref{1.24:12}) is uniformly bounded. This can be trivially estimated through the  
Hilbert-Schmidt norm (c. f. proof of Lemma~9 in \cite{1})
\begin{gather}
\left\| |v_{12}|^{\frac 12}  \chi_{[0, \rho_0]} ( |p_y|) \bigl(|p_y|^2 + k_n^2\bigr)^{-\frac 12} \xi_n (|p_y|) \bigl[-\Delta_x + p_y^2 + k_n^2\bigr]^{-1}\widehat{|v_{13}|^{\frac 12}}\right\|^2 \nonumber \\
\leq c \int_{|p_y|\leq \rho_0} d^3 p_y \frac{\xi_n^2 (|p_y|)}{( p_y^2 + k_n^2)^{3/2}} , \label{1.24:14} 
\end{gather}
where $c>0$ is a constant. The integral on the rhs of (\ref{1.24:14}) is clearly convergent and uniformly bounded. 
\end{proof}

\begin{proof}[Proof of Theorem~\ref{th:main2}]
Following \cite{3} let us denote 
\begin{equation}\label{Phi_n2}
\Phi_n^{(i)} := - \lambda_n \sqrt{|v_{12}|} \bigl[H_0 + k_n^2\bigr]^{-1} v_{i3} \psi_n  \quad\quad (i=1,2)
\end{equation}
and 
\begin{equation}\label{gn}
 g_n (y) = g_n^{(1)} (y) + g_n^{(2)} (y) , 
\end{equation}
where 
\begin{equation}\label{gni}
g_n^{(i)} (y) := \int d^3 x \phi_0 (x) \Phi_n^{(i)}  (x, y) .  
\end{equation}
The functions $g(y), g_n^{(i)} (y) \in L^1 (\mathbb{R}^3) \cap  L^2 (\mathbb{R}^3) $ coincide with the ones defined in Eqs. (93), (94) in \cite{3}.

In \cite{3} it is proved that 
\begin{equation}\label{1.16;3}
 |\hat g_n (0)||\ln k_n|^{1/2} \to \frac{\sqrt 2 a}{R(0)}  >0 , 
\end{equation}
where $\hat g_n (p_y)$ is the Fourier image of $g_n$ defined in (\ref{gn}) and 
\begin{equation}
R(0) = \int  \frac{\phi_0 (x')}{|x'|} \bigl|V_{12} (\alpha x')\bigr|^{\frac 12} . 
\end{equation}
(\ref{1.16;3}) follows from Eq.~(90) in \cite{3} and the fact that the norm of the function on the rhs of that equation goes to one for $n \to \infty$ 
(see the text under Eq.~(90) in \cite{3}). 

Our aim is to show that the sequence $|\hat g_n (0)| /M_n$ converges and $\lim_{n \to \infty} |\hat g_n (0)| /M_n = C_1 >0$; 
then (\ref{1.24:15}) follows with $C_0 = 4a^2 [C_1 R(0)]^{-2}$. Note that due to the positivity preserving property of $[H_0+k_n^2]^{-1}$ 
\begin{equation}\label{1.24:17}
 \frac{|\hat g_n (0)| }{M_n} = \frac{\|g^{(1)}_n\|_1 }{M_n} + \frac{\|g^{(2)}_n\|_1 }{M_n} , 
\end{equation}
where $g^{(1,2)} (y) \in L^1 (\mathbb{R}^3) \cap  L^2 (\mathbb{R}^3) $ are defined in Eqs. (93), (94) in \cite{3}. 
It suffices to prove the convergence of the first term on the rhs of (\ref{1.24:17}) (the convergence of the second term is proved analogously). 
Similar to \cite{3} let us introduce Jacobi coordinates $\eta, \zeta$, which are pictured in Fig.~\ref{fig:1} (right) 
\begin{gather}
 \eta = \alpha'^{-1} (r_3 - r_1) \nonumber\\
\zeta = \left[\frac{2(m_1 +m_3) m_2}{\hbar^2 (m_1 + m_2 + m_3)} \right]^{\frac 12} \left( r_2 - \frac{m_1}{m_1+m_3} r_1 - \frac{m_3}{m_1+m_3} r_3\right) \nonumber , 
\end{gather}
where $\alpha' = \hbar (m_1 + m_3)^{\frac 12} (2m_1 m_3)^{-\frac 12}$. 
The coordinates $(\eta , \zeta)$ and $(x,y)$ are connected through the orthogonal linear
transformation
\begin{gather}
 x = m_{x\eta} \eta +  m_{x\zeta} \zeta , \nonumber\\
 y = m_{y\eta} \eta +  m_{y\zeta} \zeta,\nonumber
\end{gather}
where $m_{x\eta}, m_{x\zeta} \neq 0, m_{y\eta}, m_{y\zeta}$
are real and can be expressed through mass ratios in the
system. For all $R >0$ 
we have 
\begin{gather}
 M_n^{-1} \left\| g^{(1)}_n\right\| _1 = \left\| \chi_{[0, R] }(|\eta|)\phi_0 \mathfrak{X}_n \varphi_n^{(1)}\right\| _1 \nonumber\\
+ 
M_n^{-1}\left\| \chi_{(R, \infty) }(|\eta|) \phi_0 |v_{12}|^{\frac 12} \left[H_0 + k_n^2 \right]^{-1} |v_{13}| \psi_n \right\| _1 , \label{1.24:21}
\end{gather}
where we have defined
\begin{equation}
\mathfrak{X}_n :=  |v_{12}|^{\frac 12} \left[H_0 + k_n^2 \right]^{-1}  |v_{13}|^{\frac 12} . 
\end{equation}
The operators 
$\mathfrak{X}_n : L^2 (\mathbb{R}^6) \to L^2 (\mathbb{R}^6) $ are norm-bounded and have a norm limit for $n \to \infty$, which 
we denote as $\mathfrak{X}_0$ (for the proof see f. e. Lemma~7 in \cite{1}). Thus by Lemma~\ref{lemma:1} $\mathfrak{X}_n \varphi_n^{(1)} \to \mathfrak{X}_0 \varphi_\infty^{(1)}$ in $L^2$ sense. Then 
$\chi_{[0, R] }(|\eta|)\phi_0 \mathfrak{X}_n \varphi_n^{(1)}$ converges to $\chi_{[0, R] }(|\eta|)\phi_0 \mathfrak{X}_0 \varphi_\infty^{(1)}$ in $L^1$ sense because by 
the Cauchy-Schwarz inequality 
\begin{equation}
  \left\|\chi_{[0, R] }(|\eta|)\phi_0 \bigl( \mathfrak{X}_n \varphi_n^{(1)} - \mathfrak{X}_0 \varphi_\infty^{(1)}\bigr)\right\|_1 \leq \|\chi_{[0, R] }(|\eta|)\phi_0 \| 
 \left\|\mathfrak{X}_n \varphi_n^{(1)} - \mathfrak{X}_0 \varphi_\infty^{(1)}\right\| , 
\end{equation}
where $\|\chi_{[0, R] }(|\eta|)\phi_0 \| $ is finite. Hence,  the first term on the rhs of (\ref{1.24:21}) converges for all $R >0$. 
Now the convergence of the sequence on the left hand side (lhs) of (\ref{1.24:21}) follows from Lemmas~\ref{lemma:2}, \ref{lemma:3}. 
We have proved that the sequence on the lhs of (\ref{1.24:17}) converges to $C_1 \in [0, \infty)$. 
It remains to show that 
$C_1 \neq 0$. This follows from the fact that $\varphi_\infty^{(1)}, \varphi_\infty^{(2)} \geq 0 $ and besides $\|\varphi_\infty^{(1)}\|^2 + \|\varphi_\infty^{(2)}\|^2 =1$, so 
at least one of 
the terms on the rhs of (\ref{1.24:17}) converges to a positive value. 
\end{proof}

\begin{remark}\label{remark:3}
 The proof of Theorem~\ref{th:main2} allows to express the constant $C_0$ in Theorem~\ref{th:main} in terms of zero energy solutions of Birman-Schwinger operators, namely, 
$C_0 = 4a^2 [R(0)C_1]^{-2}$, where 
\begin{equation}
C_1 =  \bigl\|\phi_0 \mathfrak{X}_0 \varphi_\infty^{(1)}\bigr\|_1 + \bigl\|\phi_0 \mathfrak{Y}_0 \varphi_\infty^{(2)}\bigr\|_1
\end{equation}
and by definition $\mathfrak{Y}_0$ is the operator norm limit 
\begin{equation}
 \mathfrak{Y}_0 =\lim_{z \to +0} |v_{12}|^{\frac 12} \left[H_0 + z \right]^{-1}  |v_{23}|^{\frac 12} . 
\end{equation}
\end{remark}

\begin{lemma}\label{lemma:2}
 For $R \to \infty$ 
\begin{equation}
 \sup_{n} M_n^{-1}\left\| \chi_{(R, \infty) }(|\eta|) \phi_0 |v_{12}|^{\frac 12} \left[H_0 + k_n^2 \right]^{-1} |v_{13}| \psi_n \right\| _1 \to 0 . 
\end{equation}
\end{lemma}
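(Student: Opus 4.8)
The plan is to first convert the $L^1$-norm into an ordinary integral by exploiting positivity. Since $V_{i3}\le 0$ we have $M_n^{-1}|v_{13}|\psi_n = |v_{13}|^{\frac12}\varphi^{(1)}_n$, so (up to the bounded factor $\lambda_n\to\lambda_{cr}$) the quantity in the lemma equals $\bigl\|\chi_{(R,\infty)}(|\eta|)\,\phi_0\,\mathfrak{X}_n\varphi^{(1)}_n\bigr\|_1$ with $\mathfrak{X}_n$ as defined in the proof of Theorem~\ref{th:main2}. Because $\phi_0\ge 0$, $\mathfrak{X}_n$ is positivity preserving, and $\varphi^{(1)}_n\ge 0$, the integrand is nonnegative and this norm is simply $\int_{|\eta|>R}\phi_0\,u_n\,d^3x\,d^3y$, where $u_n:=\mathfrak{X}_n\varphi^{(1)}_n$. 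By Lemma~\ref{lemma:1} and the norm convergence of $\mathfrak{X}_n$ one has $u_n\to u_\infty$ in $L^2(\mathbb{R}^6)$, and the factor $\rho(x):=\phi_0(x)|V_{12}(\alpha x)|^{\frac12}$ decays exponentially in $x$ by (\ref{restr3}).

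Next I would split the region $\{|\eta|>R\}$ geometrically. Inverting the orthogonal transformation gives $\eta=m_{x\eta}x+m_{y\eta}y$, and orthogonality together with $m_{x\eta},m_{x\zeta}\neq 0$ forces $m_{y\eta}\neq 0$; hence on $\{|\eta|>R\}$ at least one of $|x|>c_1R$ or $|y|>c_2R$ holds, with $c_1=(2|m_{x\eta}|)^{-1}$ and $c_2=(2|m_{y\eta}|)^{-1}$. Writing $\phi_0 u_n=\rho\,w_n$ with $w_n=[H_0+k_n^2]^{-1}|v_{13}|^{\frac12}\varphi^{(1)}_n$, the large-$|x|$ piece $\int_{|x|>c_1R}\rho\,w_n$ is dominated by the exponential decay of $\rho$: here $\int_{y}w_n(x,y)\,d^3y$ is governed by the zero-momentum fibre $(-\Delta_x+k_n^2)^{-1}$, which stays controlled as $k_n\to+0$, so this contribution tends to zero uniformly in $n$. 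This part is routine.

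The main obstacle is the large-$|y|$ piece $\int_{|y|>c_2R}\gamma_n(y)\,d^3y$, where $\gamma_n(y):=\int_x\phi_0\,u_n\,d^3x\ge 0$ coincides, up to the factor $\lambda_n$, with $M_n^{-1}g^{(1)}_n(y)$. What is needed is the uniform tail bound $\sup_n\int_{|y|>\varrho}M_n^{-1}g^{(1)}_n(y)\,d^3y\to 0$ as $\varrho\to\infty$. The difficulty is that $\phi_0$ carries no decay in the $y$ direction, while at the critical energy $k_n\to+0$ the six-dimensional resolvent kernel loses its exponential decay and is only borderline integrable, behaving like $|X-X'|^{-4}$; consequently a plain weighted-$L^2$ (Hilbert--Schmidt) bound on $u_n$ saturates at a $y$-weight exponent below $1$, which is too weak to control the $L^1$-tail in $y$ through Cauchy--Schwarz, and the soft $L^2$-compactness of $\{u_n\}$ is likewise insufficient. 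I would therefore extract the tail bound from the quantitative low-energy analysis of $g^{(1)}_n$ in \cite{3} --- the same representation that produces (\ref{1.16;3}) and the normalization statement below Eq.~(90) there --- obtaining a uniform-in-$n$ dominating function $F\in L^1(\mathbb{R}^3)$ with $M_n^{-1}g^{(1)}_n(y)\le F(y)$ (equivalently, equi-integrability at infinity of the nonnegative family $M_n^{-1}g^{(1)}_n$). Granted such a bound, $\int_{|y|>\varrho}M_n^{-1}g^{(1)}_n\le\int_{|y|>\varrho}F\to 0$ uniformly, and together with the large-$|x|$ estimate this proves the lemma.

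I expect the decisive difficulty to lie entirely in establishing this $y$-tail bound uniformly as $k_n\to+0$: because the zero-energy six-dimensional resolvent sits exactly at the threshold of integrability, the estimate cannot be closed by abstract compactness arguments and must rely on the explicit low-energy resolvent bounds inherited from \cite{3}.
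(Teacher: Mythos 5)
Your reduction is set up correctly (positivity turns the $L^1$-norm into an integral, the geometric split of $\{|\eta|>R\}$ is legitimate, and the identification of the large-$|y|$ piece with the tail of $M_n^{-1}\lambda_n^{-1}g^{(1)}_n$ is right), but the proof has a genuine gap at exactly the point you yourself flag as decisive: the uniform-in-$n$ equi-integrability at infinity of $M_n^{-1}g^{(1)}_n$ is \emph{asserted} to be extractable from \cite{3}, not proved. No such tail bound exists there: Eq.~(90) of \cite{3}, which yields (\ref{1.16;3}), controls the single number $\hat g_n (0)$, i.e.\ (by positivity) the total mass $\|g^{(1)}_n\|_1 + \|g^{(2)}_n\|_1$, and says nothing about \emph{where} in $y$-space that mass sits. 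A uniform dominating function $F\in L^1$ with $M_n^{-1}g^{(1)}_n \leq F$ is essentially a restatement of the lemma itself --- the lemma \emph{is} the assertion that no $L^1$-mass escapes to infinity uniformly in $n$ --- so deferring it to the literature is circular. The paper has to manufacture this tail decay, and the mechanism is the one your proposal never engages: one inserts $1 = \tilde B_{13}(k_n)\,\tilde B_{13}^{-1}(k_n)$ with the regularizer of (\ref{b13})--(\ref{tail2}), so that the vanishing weight $\tilde t_n (p_\zeta) \sim |p_\zeta|^{1-\delta} + k_n^{1-\delta}$ lands on the resolvent kernel in the pointwise bound (\ref{xw23}) and compensates precisely the loss of exponential decay at small $\sqrt{p_\zeta^2 + k_n^2}$ (this is what makes $I_n^{(1)} = \mathcal{O}(R^{-\delta})$ in (\ref{1.24:34})), while the amplifying factor $\tilde B_{13}^{-1}(k_n)$ is absorbed into the functions $\Psi^{(i)}_n$ of (\ref{psin1})--(\ref{psin3}), whose uniform bounds $\sup_n M_n^{-1}\|\Psi^{(i)}_n\| < \infty$ are the content of Lemma~\ref{lemma:4} and require their own nontrivial trace-ideal estimates. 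Without this regularization-plus-compensation step (or an equivalent substitute), the large-$|y|$ estimate cannot be closed, as you in effect concede.

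A secondary problem: the large-$|x|$ piece is not ``routine'' either. At threshold every unsandwiched resolvent bound degenerates --- $\|(-\Delta_x + k_n^2)^{-1}\| = k_n^{-2} \to \infty$, and the six-dimensional kernel $\sim |X-X'|^{-4}$, after the $y$-integrations, leaves an $x'$-integral of $|x-x'|^{-2}$ over $\mathbb{R}^3$, which diverges --- so neither your ``zero-momentum fibre'' argument nor a naive kernel estimate is uniform in $n$; the factor $|v_{13}|^{\frac 12}$ localizes only the combination $\eta' = m_{x\eta}x' + m_{y\eta}y'$, not $x'$ itself. The paper's proof avoids the $(x,y)$-split altogether: it keeps the cutoff in $\eta$, works in $(\eta,\zeta)$ coordinates, integrates the exponentially decaying factor $e^{-\tilde b_2 |m_{x\eta}\eta + m_{x\zeta}\zeta|}$ over $\zeta$ \emph{first} (possible since $m_{x\zeta}\neq 0$), and only then splits on $|\eta'| \lessgtr R/2$, where $\eta'$ is the argument of $V_{13}$, so that the outer region is controlled simply by the $L^1$-tail of $V_{13}$ and the inner one by the surviving kernel decay. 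Your geometric split is not wrong in itself, but it discards the coordinate structure that allows the estimates to close.
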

\begin{proof}
 The proof is partly based on that of Lemma~4 in \cite{3}, and we need to introduce additional notations from \cite{3}. 
$\mathcal{F}_{13}$ denotes the partial Fourier transform, which acts on $f(\eta,
\zeta) $ as
\begin{equation}\label{xw20}
 \mathcal{F}_{13} f := \hat f(\eta, p_\zeta) = \frac 1{(2 \pi )^{3/2}} \int d^3
\zeta \; e^{-ip_\zeta \cdot \; \zeta} f(\eta, \zeta) .
\end{equation}
Let us introduce the operator function
\begin{equation}\label{b13}
 \tilde B_{13} (k_n) := \mathcal{F}^{-1}_{13} \tilde t_n (p_\zeta)
\mathcal{F}_{13} ,
\end{equation}
where
\begin{equation}\label{tail2}
    \tilde t_n (p_\zeta) = \left\{ \begin{array}{ll}
    |p_\zeta|^{1-\delta} + (k_n)^{1-\delta}  & \quad \mathrm{if} \;\; |p_\zeta|
\leq 1  \\
    1 + (k_n)^{1-\delta} & \quad \mathrm{if} \;\; |p_\zeta| \geq 1 . \\
    \end{array}
    \right.
\end{equation}
Following Eq.~(30) in \cite{3} we define 
\begin{equation}
  F_n :=  \lambda_n \bigl[H_0 + k_n^2\bigr]^{-1} v_{13} \psi_n . 
\end{equation}
Then using Eq.~(98) in \cite{3} we obtain the inequality 
\begin{gather}
M_n^{-1}\left\| \chi_{(R, \infty) }(|\eta|) \phi_0 |v_{12}|^{\frac 12} \left[H_0 + k_n^2 \right]^{-1} |v_{13}| \psi_n \right\| _1 = 
M_n^{-1} \lambda_n^{-1}\left\| \chi_{(R, \infty) }(|\eta|) \phi_0 |v_{12}|^{\frac 12} F_n \right\| _1 \nonumber\\
\leq  M_n^{-1} \lambda_n^{-1} \sum_{i=1}^3\left\| \chi_{(R, \infty) }(|\eta|) \phi_0 |v_{12}|^{\frac 12} \tilde F_n^{(i)} \right\| _1  , \label{1.20;23}
\end{gather}
where 
\begin{equation}
 \tilde F^{(i)}_n := \bigl[ H_0 + k_n^2\bigr]^{-1} |v_{13}|^{1/2}
\tilde B_{13}(k_n) \Psi^{(i)}_n   \label{xw17}
\end{equation}
and 
\begin{gather}
  \Psi^{(1)}_n  := |v_{13}|^{1/2} \tilde B_{13}^{-1} (k_n) \bigl[ H_0 +
k_n^2\bigr]^{-1} |v_{12}| |\psi_n| , \label{psin1}\\
 \Psi^{(2)}_n  := \lambda_n |v_{13}|^{1/2} \tilde B_{13}^{-1} (k_n) \bigl[ H_0 +
k_n^2\bigr]^{-1} |v_{23}| |\psi_n|  \label{psin2},\\
 \Psi^{(3)}_n  := \lambda_n  |v_{13}|^{1/2} \bigl[ H_0 + k_n^2\bigr]^{-1}  |v_{13}|^{1/2}
  \Bigl\{1 - \lambda_n |v_{13}|^{1/2} \bigl[H_0 
+ k_n^2\bigr]^{-1} |v_{13}|^{1/2} \Bigr\}^{-1} \nonumber \\\times\tilde B_{13}^{-1} (k_n)   |v_{13}|^{1/2} 
 \bigl[ H_0 + k_n^2\bigr]^{-1}  \bigl(|v_{12}| +  \lambda_n
|v_{23}|\bigr)|\psi_n| . \label{psin3}
\end{gather}
Eqs. (\ref{xw17}) and (\ref{psin1})-((\ref{psin3})) come from Eqs. (43), (44), (48) and (49) in \cite{3}, where one has to take into 
account that all interaction potentials 
are nonpositive. By Eq. (103) in \cite{3}
\begin{equation}\label{xw23}
\bigl| \tilde F^{(i)}_n (\eta , \zeta) \bigr| \leq \frac 1{2^{7/2} \pi^{5/2}}  \int d^3
\eta' \int d^3 p_\zeta \; \bigl| V_{13}(\alpha' \eta')\bigr|^{1/2}
\frac{e^{-\sqrt{p_\zeta^2 + k_n^2}|\eta-\eta'|}}{|\eta-\eta'|} \tilde t_n
(p_\zeta) \bigl| \hat \Psi^{(i)}_n (\eta', p_\zeta) \bigr| , 
\end{equation}
where $\hat \Psi^{(i)}_n  = \mathcal{F}_{13}  \Psi^{(i)}_n $. Using the exponential falloff of $V_{12} (x)$ from (\ref{1.16;1}) one can easily derive the inequality
\begin{equation}\label{20.1;15}
 \phi_0 (x) \bigl|V_{12} (\alpha x) \bigr| \leq \tilde b_1 e^{-\tilde b_2 |x|}, 
\end{equation}
where $\tilde b_{1,2}$ are constants. After substituting (\ref{20.1;15}) and (\ref{xw23}) into (\ref{1.20;23}), interchanging the order of integration and applying the Cauchy-Shwarz 
inequality we get 
\begin{gather}
M_n^{-1}\left\| \chi_{(R, \infty) }(|\eta|) \phi_0 |v_{12}|^{\frac 12} \left[H_0 + k_n^2 \right]^{-1} |v_{13}| \psi_n \right\| _1 \nonumber\\
 \leq 
\frac {\tilde b_1 \lambda_n^{-1}}{2^{7/2} \pi^{5/2}}
\sum_{i=1}^3  M_n^{-1} \bigl\| \Psi^{(i)}_n \bigr\|
\left\{ \int d^3 \eta' \int d^3 p_\zeta \; \bigl| V_{13}(\alpha' \eta')\bigr| \,
{\tilde t}^{\; 2}_n (p_\zeta) \, {\tilde J}^2 (\eta', p_\zeta) \right\}^{1/2} , 
\end{gather}
where 
\begin{equation}
 {\tilde J} (\eta', p_\zeta):= \int_{|\eta| > R} d^3 \eta \int d^3 \zeta  \;
 \frac{e^{-\sqrt{p_\zeta^2 + k_n^2}|\eta-\eta'|}}{|\eta-\eta'|}  e^{-\tilde b_2 |m_{x\eta} \eta +
m_{x\zeta} \zeta|} . 
\end{equation}
By Lemma~\ref{lemma:4} we only need to prove that $ \sup_{n} I_n \to 0 $ for $R \to \infty$, where we defined
\begin{equation}
I_n := \int d^3 \eta' \int d^3 p_\zeta \; \bigl| V_{13}(\alpha' \eta')\bigr| \,
{\tilde t}^{\; 2}_n (p_\zeta) \, {\tilde J}^2 (\eta', p_\zeta) . 
\end{equation}
Let us split the last integral as $I_n = I_n^{(1)} + I_n^{(2)}$, where
\begin{gather}
I_n^{(1)} := \int_{|\eta'| \leq R/2} d^3 \eta' \int d^3 p_\zeta \; \bigl| V_{13}(\alpha' \eta')\bigr| \,
{\tilde t}^{\; 2}_n (p_\zeta) \, {\tilde J}^2 (\eta', p_\zeta) , \label{1.24:31}\\
I_n^{(2)} := \int_{|\eta'|>R/2} d^3 \eta' \int d^3 p_\zeta \; \bigl| V_{13}(\alpha' \eta')\bigr| \,
{\tilde t}^{\; 2}_n (p_\zeta) \, {\tilde J}^2 (\eta', p_\zeta) . \label{1.24:32}
\end{gather}
Clearly, we can write 
\begin{equation}\label{1.24:37}
 {\tilde J} (\eta', p_\zeta) \leq c_1 \int_{|\eta| >R} d^3 \eta \frac{e^{-\sqrt{p_\zeta^2 + k_n^2}|\eta-\eta'|}}{|\eta-\eta'|} , 
\end{equation}
where $c_1 >0$ is a constant. For $|\eta'| \leq R/2$ and $|\eta| > R/2$ we have $|\eta - \eta'| \geq |\eta| - |\eta'| > |\eta|/2$, which gives the estimate
\begin{equation}
 {\tilde J} (\eta', p_\zeta) \leq c_2 \frac{e^{-\sqrt{p_\zeta^2 + k_n^2}R/2}}{\sqrt{p_\zeta^2 + k_n^2}} 
\left( R + \frac2{\sqrt{p_\zeta^2 + k_n^2}} \right) , 
\end{equation}
where $c_2 >0$ is a constant. Substituting this estimate into (\ref{1.24:31}) and using that $V_{13} (x) \in L^1 (\mathbb{R}^3)$ we obtain 
\begin{equation}
 I_n^{(1)} \leq c_3 \int d^3 p_\zeta \, {\tilde t}^{\; 2}_n (p_\zeta) \frac{e^{-\sqrt{p_\zeta^2 + k_n^2}R}}{p_\zeta^2 + k_n^2} \left[R^2 + \frac4{p_\zeta^2 + k_n^2}\right], 
\end{equation}
where $c_3 >0$ is a constant. Substituting Eq. (40) from \cite{3} we get 
\begin{equation}
 I_n^{(1)} \leq c_4 \int_0^1 s^{2-2\delta} e^{-sR} \left[R^2 + \frac4{s^2} \right] ds
+ c_4 \int_1^\infty  e^{-sR} \left[R^2 + \frac4{s^2} \right]ds 
\end{equation}
where $c_4 >0$ is another constant. Let us set $A_\beta := \sup_{x\geq 0} x^{\beta} e^{-x}$, where $A_\beta$ is finite and depends only on $\beta$. Then 
\begin{equation}\label{1.24:34}
   I_n^{(1)} \leq c_4 R^{-\delta} (A_{2+\delta} + 4A_\delta) \int_0^1 s^{-3\delta} ds + c_4 R^2 e^{-R/2} \int_1^\infty e^{-sR/2}\bigl[1+ 4s^{-2}R^{-2}\bigr] ds = \hbox{o}(R)
\end{equation}

It is easy to see that the terms on the rhs of (\ref{1.24:34}) vanish for $R \to \infty$. Let us estimate $I_n^{(2)} $. From (\ref{1.24:37}) we get 
\begin{equation}
  {\tilde J} (\eta', p_\zeta) \leq c_1 \int d^3 \eta \frac{e^{-\sqrt{p_\zeta^2 + k_n^2}|\eta-\eta'|}}{|\eta-\eta'|} \leq \frac{c'_1}{p_\zeta^2} , 
\end{equation}
where $c'_1 >0$ is a constant. Substituting this into (\ref{1.24:32}) we obtain
\begin{equation}
 I_n^{(2)} \leq c'_1 \left\{\int_{|\eta'|>R/2} d^3 \eta' \bigl| V_{13}(\alpha' \eta')\bigr| \right\} \int d^3 p_\zeta \; {\tilde t}^{\; 2}_n (p_\zeta) |p_\zeta|^{-4} = \hbox{o}(R), 
\end{equation}
where the integral in curly brackets goes to zero because $V_{13} (x) \in L^1 (\mathbb{R}^3)$ and the second integral is uniformly bounded for all $n$. 
\end{proof}

\begin{lemma}\label{lemma:3}
 Suppose that the sequence $a_n \in \mathbb{C}$ is such that $a_n = b_n (R) + c_n (R)$, where $ b_n (R), c_n (R) \in \mathbb{C}$ depend on a parameter $R >0$. Additionally 
assume that $ b_n (R)$ is convergent for all $R >0$ and $\limsup_{n \to \infty} |c_n ( R)| \to 0$ for $R \to \infty$. Then $a_n$ converges. 
\end{lemma}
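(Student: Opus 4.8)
The plan is to show that $a_n$ is a Cauchy sequence in $\mathbb{C}$; since $\mathbb{C}$ is complete, this immediately yields convergence. The two-parameter decomposition $a_n = b_n(R) + c_n(R)$ is to be exploited in the natural order suggested by the hypotheses: first choose $R$ large so that the error terms $c_n$ are uniformly small along the tail in $n$, and then, holding that $R$ fixed, use the convergence (hence the Cauchy property) of $b_n(R)$ in $n$ to control the main term.

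Concretely, I would fix $\varepsilon > 0$ and proceed in two selections. First I would invoke the assumption $\limsup_{n \to \infty} |c_n(R)| \to 0$ as $R \to \infty$ to pick a single value $R_0 > 0$ with $\limsup_{n\to\infty} |c_n(R_0)| < \varepsilon/4$. By the definition of $\limsup$, this strict inequality furnishes an index $N_1$ such that $|c_n(R_0)| < \varepsilon/4$ for all $n \geq N_1$. Next, with $R_0$ now frozen, the sequence $b_n(R_0)$ converges by hypothesis and is therefore Cauchy, so there is an index $N_2$ with $|b_n(R_0) - b_m(R_0)| < \varepsilon/2$ for all $m,n \geq N_2$.

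The two estimates are then combined by the triangle inequality. Writing $a_n - a_m = \bigl(b_n(R_0) - b_m(R_0)\bigr) + c_n(R_0) - c_m(R_0)$, for all $m,n \geq \max(N_1, N_2)$ one obtains
\begin{equation}
|a_n - a_m| \leq |b_n(R_0) - b_m(R_0)| + |c_n(R_0)| + |c_m(R_0)| < \frac{\varepsilon}{2} + \frac{\varepsilon}{4} + \frac{\varepsilon}{4} = \varepsilon .
\end{equation}
As $\varepsilon > 0$ is arbitrary, this shows $a_n$ is Cauchy, and completeness of $\mathbb{C}$ finishes the argument.

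The reasoning is elementary and I do not expect a genuine obstacle; the single point demanding care is the correct handling of the double limit hidden in $\limsup_{n\to\infty}|c_n(R)| \to 0$. One must select $R = R_0$ \emph{first} and only afterwards extract the index $N_1$ beyond which $|c_n(R_0)|$ remains below $\varepsilon/4$, rather than attempting a choice of $R$ that is uniform in $n$. Indeed the whole utility of the lemma, as it is applied to the sequence on the left-hand side of (\ref{1.24:21}), rests on being permitted to commit to $R$ in advance of the tail estimate in $n$.
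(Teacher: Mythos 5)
Your proof is correct and is essentially the same as the paper's: both establish that $a_n$ is Cauchy by choosing $R$ first from the $\limsup$ hypothesis, then an index beyond which $|c_n(R)|$ is small, then an index making $b_n(R)$ Cauchy, and combining via the triangle inequality. The only difference is the bookkeeping of the $\varepsilon$ fractions, which is immaterial.
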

\begin{proof}
 The proof is a trivial application of the Cauchy convergence criterion. For any $\varepsilon >0$ fix $N_1, R >0$ so that $|c_n ( R)| < \varepsilon/3$ for all $n > N_1$. Choose $N_2$ 
so that $|b_n (R) - b_m (R)| < \varepsilon/3$ for all $n, m > N_2$. Then 
\begin{equation}
 |a_n - a_m| \leq |b_n (R) - b_m (R)| + |c_n (R)| + |c_m (R)| < \varepsilon
\end{equation}
for all $n, m > \max (N_1, N_2)$. 
\end{proof}

\begin{lemma}\label{lemma:4}
$\sup_n M_n^{-1} \|\Psi^{(i)}_n\| < \infty $, where $\Psi^{(i)}_n$ for $i=1,2,3$ are defined in Eqs.(\ref{psin1})-(\ref{psin3}). 
\end{lemma}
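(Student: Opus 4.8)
The plan is to bound each $M_n^{-1}\|\Psi^{(i)}_n\|$ by reducing it to a product of uniformly norm-bounded operators acting on the norm-bounded sequences supplied by Lemma~\ref{lemma:1}. Since $\psi_n>0$ is the ground state, $|\psi_n|=\psi_n$, and I would first rewrite the two source terms occurring in (\ref{psin1})--(\ref{psin3}) as
\begin{gather}
M_n^{-1}|v_{23}|^{\frac12}\psi_n=\varphi^{(2)}_n,\qquad M_n^{-1}|v_{12}|^{\frac12}\psi_n=\tilde B_{12}^{-1}(k_n)\varphi^{(3)}_n,\nonumber
\end{gather}
the second identity following from the definition of $\varphi^{(3)}_n$ in Lemma~\ref{lemma:1}. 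Factoring $|v_{ab}|\psi_n=|v_{ab}|^{\frac12}\bigl(|v_{ab}|^{\frac12}\psi_n\bigr)$ throughout, and using that $\|\varphi^{(2)}_n\|$ and $\|\varphi^{(3)}_n\|$ are bounded (the latter by the second assertion of Lemma~\ref{lemma:1}) together with the boundedness of the scalars $\lambda_n$, the lemma is reduced to the uniform-in-$n$ norm boundedness of the composite operators
\begin{gather}
|v_{13}|^{\frac12}\tilde B_{13}^{-1}(k_n)\bigl[H_0+k_n^2\bigr]^{-1}|v_{23}|^{\frac12},\qquad |v_{13}|^{\frac12}\tilde B_{13}^{-1}(k_n)\bigl[H_0+k_n^2\bigr]^{-1}|v_{12}|^{\frac12}\tilde B_{12}^{-1}(k_n),\nonumber\\
\tilde B_{13}^{-1}(k_n)|v_{13}|^{\frac12}\bigl[H_0+k_n^2\bigr]^{-1}|v_{23}|^{\frac12},\qquad \tilde B_{13}^{-1}(k_n)|v_{13}|^{\frac12}\bigl[H_0+k_n^2\bigr]^{-1}|v_{12}|^{\frac12}\tilde B_{12}^{-1}(k_n),\nonumber
\end{gather}
where the first comes from $\Psi^{(2)}_n$, the second from $\Psi^{(1)}_n$, and the last two from the source factor in $\Psi^{(3)}_n$ once its prefactor has been split off.

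That prefactor I would handle separately. Writing $T_n:=|v_{13}|^{\frac12}\bigl[H_0+k_n^2\bigr]^{-1}|v_{13}|^{\frac12}$, the operator multiplying the source in (\ref{psin3}) is $\lambda_n T_n\{1-\lambda_n T_n\}^{-1}$. Here $T_n$ is positive and, by monotonicity of the resolvent in $k_n^2$, satisfies $\|T_n\|\le\|T_0\|$; since $\lambda_n<\tilde\lambda\le\lambda'_1$ and $\lambda'_1$ is the critical coupling of the $\{1,3\}$ pair (see (\ref{tildelambda})), the product $\lambda_n\|T_n\|$ stays bounded away from $1$. Hence $\{1-\lambda_n T_n\}^{-1}$ is uniformly bounded by its Neumann series, and the whole prefactor is uniformly bounded.

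The heart of the proof is the uniform boundedness of the four composite operators. The key observation is that $\tilde B_{13}^{-1}(k_n)$ acts as multiplication by $\tilde t_n^{-1}(p_\zeta)$ in the $\mathcal{F}_{13}$ representation and commutes with $\bigl[H_0+k_n^2\bigr]^{-1}$ (both are diagonal in $p_\zeta$), and likewise $\tilde B_{12}^{-1}(k_n)$ acts as multiplication by $\xi_n^{-1}(p_y)$ and commutes with $\bigl[H_0+k_n^2\bigr]^{-1}$ in the $\mathcal{F}_{12}$ representation. I would therefore estimate each operator norm through the Hilbert--Schmidt norm, following Lemmas~7 and 9 of \cite{1} and the kernel bounds of \cite{3} (cf. (\ref{xw23})), the only modification being the extra weights $\tilde t_n^{-2}(p_\zeta)$ and $\xi_n^{-2}(p_y)$ appearing under the momentum integrals. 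By the explicit form of the weights in (\ref{tail}),(\ref{tail2}) and the falloff assumptions (\ref{restr}),(\ref{restr3}) with $0<\delta<1/8$, the governing integrals converge uniformly in $n$; the uniformity as $k_n\to0$ follows because $\tilde t_n^{-1}(p_\zeta)\le|p_\zeta|^{-(1-\delta)}$ and $\xi_n^{-1}(p_y)\le|p_y|^{-\delta/8}$ are $n$-independent majorants that are integrable against the three-dimensional volume elements $d^3p_\zeta$, $d^3p_y$.

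The main obstacle is exactly the control of $\tilde B_{13}^{-1}(k_n)$: its plain operator norm is $\tilde t_n^{-1}(0)=k_n^{-(1-\delta)}$, which diverges as $k_n\to0$, so it cannot be pulled out as a bounded factor. One must instead keep it inside the Hilbert--Schmidt integral and let the resolvent decay, together with the $|p_\zeta|^2$ Jacobian in $d^3p_\zeta$, absorb the low-momentum amplification $\tilde t_n^{-2}(p_\zeta)$; this borderline convergence is precisely what fixes the admissible range of $\delta$. Combining this low-momentum control with the high-momentum integrability that forces the decay conditions (\ref{restr}),(\ref{restr3}) into play is what the construction of the weight operators $\tilde B_{13}$, $\tilde B_{12}$ in \cite{3} was designed to secure, and carrying it through for the four combinations above — in particular for the two operators carrying \emph{both} $\tilde B_{13}^{-1}$ and $\tilde B_{12}^{-1}$ — is where the real work lies.
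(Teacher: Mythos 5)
Your reductions are sound and in fact coincide with the paper's own first moves: the identities $M_n^{-1}|v_{23}|^{1/2}\psi_n=\varphi^{(2)}_n$ and $M_n^{-1}|v_{12}|^{1/2}\psi_n=\tilde B_{12}^{-1}(k_n)\varphi^{(3)}_n$ are exactly how the paper brings Lemma~\ref{lemma:1} to bear; your Neumann-series treatment of the prefactor $\lambda_n T_n\{1-\lambda_n T_n\}^{-1}$ in $\Psi^{(3)}_n$ is a legitimate substitute for the paper's citation of \cite{3} for $\sup_n\|\mathcal{T}^{(1)}_n\|, \sup_n\|Q_n\|<\infty$; and for $\Psi^{(2)}_n$ deferring to the estimates behind Eq.~(63) of \cite{3} is what the paper does as well. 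The gap sits precisely at the point you flag as ``where the real work lies'': the operators carrying both $\tilde B_{13}^{-1}$ and $\tilde B_{12}^{-1}$, which do not occur in \cite{3} and are the new content of this lemma. For these you propose a direct Hilbert--Schmidt/kernel estimate ``with the extra weights under the momentum integrals'', but this cannot be set up. The kernel technique of Lemmas~7, 9 in \cite{1} and of \cite{3} requires a partial Fourier representation in which the potential present is a multiplication operator and everything else is fibered; your composite operators contain \emph{two} potentials, $V_{13}(\alpha'\eta)$ and $V_{12}(\alpha x)$, which are never simultaneously diagonal (fibering in $p_\zeta$ ruins $V_{12}$, fibering in $p_y$ ruins $V_{13}$), and a global six-dimensional Hilbert--Schmidt computation is unavailable because each potential decays in only three of the six coordinates. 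Moreover your quantitative mechanism --- ``let the resolvent decay, together with the $|p_\zeta|^2$ Jacobian, absorb $\tilde t_n^{-2}$'' --- is false: keeping the full resolvent next to the strong weight produces, at fiber momenta where $m_{y\zeta}p_y\approx 0$, integrals of the form
\begin{equation*}
\int_{|s|\leq 1}\frac{d^3 s}{\tilde t_n^{\;2}\bigl(m_{x\zeta}s+m_{y\zeta}p_y\bigr)\bigl[s^2+p_y^2+k_n^2\bigr]^{2}}\;\gtrsim\;\int_{k_n\leq |s|\leq 1}\frac{d^3 s}{|s|^{6-2\delta}}\;\sim\; k_n^{-(3-2\delta)} ,
\end{equation*}
which diverges as $k_n\to 0$; local integrability of $|p_\zeta|^{-2(1-\delta)}$ by itself settles nothing, because the weight singularity and the resolvent singularity must be integrated in the same three-dimensional fiber.

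The missing idea is the paper's factorization. One splits the resolvent into fractional powers and cross-pairs each potential with the weight that is a multiplier in the \emph{complementary} momentum: since $\tilde B_{12}^{-1}$, $\tilde B_{13}^{-1}$ are functions of $p_y$, $p_\zeta$ and hence commute with $H_0$, with each other, and with the potential of ``their own'' pair, one may write $M_n^{-1}\Psi^{(1)}_n=\mathfrak{T}_n\mathfrak{D}^*_n\varphi^{(3)}_n$ with $\mathfrak{T}_n=|v_{13}|^{\frac12}\tilde B_{12}^{-1}\bigl[H_0+k_n^2\bigr]^{\frac{-3+\delta}{4}}$ and $\mathfrak{D}_n=|v_{12}|^{\frac12}\tilde B_{13}^{-1}(k_n)\bigl[H_0+k_n^2\bigr]^{\frac{-1-\delta}{4}}$ (and similarly inside $\Psi^{(3)}_n$). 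Each factor now contains a single potential and a single weight, is fibered in its own representation ($\mathcal{F}_{13}$ for $\mathfrak{T}_n$, $\mathcal{F}_{12}$ for $\mathfrak{D}_n$), and is estimated uniformly over the fiber momentum by trace-ideal bounds (Theorem~4.1 of \cite{traceideals}; Hilbert--Schmidt at low momenta, $\mathcal{I}_3$ at high momenta, where the fractional powers are not square integrable) combined with H\"older's inequality. The split exponents are what make both sides integrable: the strong weight contributes $|s|^{-(2-2\delta)}$ and sits with the squared resolvent power $|s|^{-(1+\delta)}$, total $|s|^{-(3-\delta)}$, while the weak weight contributes $|s|^{-\delta/4}$ and sits with $|s|^{-(3-\delta)}$, total $|s|^{-(3-3\delta/4)}$ --- both locally integrable in three dimensions, which is impossible if the full resolvent stays on either side. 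This balancing, not the boundedness of the sources, is the actual content of the lemma, and it is absent from your proposal.
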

\begin{proof}
 We have $M_n^{-1} \Psi^{(2)}_n = \lambda_n \mathcal{T}^{(2)}_n \varphi_n^{(1)}$, where 
\begin{equation}
  \mathcal{T}^{(2)}_n := \tilde B_{13}^{-1} (k_n) |v_{13}|^{1/2}_- \bigl[ H_0 + k_n^2\bigr]^{-1} |v_{12}|^{1/2} 
\end{equation}
 are uniformly norm-bounded 
operators (they are defined in the same way as in Eq.~(63) in \cite{3}). Thus by Lemma~\ref{lemma:1} $\sup_n M_n^{-1} \|\Psi^{(2)}_n \| < \infty$. From definition of $\Psi^{(1)}_n$ we have 
\begin{equation}\label{1.24:41}
 M_n^{-1} \Psi^{(1)}_n = \mathfrak{T}_n \mathfrak{D}^*_n \varphi_n^{(3)} , 
\end{equation}
where 
\begin{gather}
\mathfrak{T}_n  := |v_{13}|^\frac12 \tilde  B_{12}^{-1} \bigl[ H_0 + k_n^2\bigr]^{\frac{-3+\delta}4}\\
\mathfrak{D}_n:= |v_{12}|^\frac12 \tilde B_{13}^{-1} (k_n) \bigl[ H_0 + k_n^2\bigr]^{\frac{-1-\delta}4}. 
\end{gather}
$\mathfrak{T}_n , \mathfrak{D}_n $ are bounded operators on $L^2 (\mathbb{R}^6)$. 
From (\ref{1.24:41}) $\sup_n M_n^{-1} \| \Psi^{(1)}_n\| < \infty $ follows from 
$\mathfrak{T}_n , \mathfrak{D}_n $ being uniformly norm-bounded. Let us start with estimating the norm of $\mathfrak{T}_n$. 
Let us define operator functions $\mathfrak{T}^{(1)}_n, \mathfrak{T}^{(2)}_n : \mathbb{R}^3 \to \mathfrak{B}\bigl(L^2 (\mathbb{R}^3)\bigr)$, 
which act on $h(\eta) \in L^2 (\mathbb{R}^3)$ as 
follows
\begin{gather}
  \mathfrak{T}^{(1)}_n (p_\zeta) h = |V_{13}(\alpha'\eta)|^{\frac 12}\left[\xi_n \bigl(m_{y\eta}(-i\nabla_\eta) + m_{y\zeta}p_\zeta\bigr) \right]^{-1}
\bigl[-\Delta_\eta +p_\zeta^2 + k_n^2\bigr]^{\frac{-3+\delta}4}  \nonumber \\
\times \chi_{[0,1]}\bigl(|-i\nabla_\eta|\bigr) h , \\
  \mathfrak{T}^{(2)}_n (p_\zeta) h = |V_{13}(\alpha'\eta)|^{\frac 12}\left[\xi_n \bigl(m_{y\eta}(-i\nabla_\eta) + m_{y\zeta}p_\zeta\bigr) \right]^{-1}
\bigl[-\Delta_\eta +p_\zeta^2 + k_n^2\bigr]^{\frac{-1-\delta}4} \nonumber \\
\times\chi_{(1,\infty)}\bigl(|-i\nabla_\eta|\bigr) h . 
\end{gather}
The operators like $ \chi_{[0,1]}\bigl(|-i\nabla_\eta|\bigr)$ act in the sense described in Chapter 4 in \cite{traceideals}. It is easy to see that 
\begin{equation}
 [\mathcal{F}_{13} \mathfrak{T}_n f] (\eta, p_\zeta) = \mathfrak{T}^{(1)}_n (p_\zeta) \hat f (\eta, p_\zeta) + \mathfrak{T}^{(2)}_n (p_\zeta) \hat f (\eta, p_\zeta)  , 
\end{equation}
where $\hat f (\eta, p_\zeta) \equiv \mathcal{F}_{13} f$. Thus 
\begin{equation}\label{1.24:44}
 \|\mathfrak{T}_n\| \leq \sup_{p_\zeta}\|\mathfrak{T}^{(1)}_n (p_\zeta)\| + \sup_{p_\zeta}\|\mathfrak{T}^{(2)}_n (p_\zeta)\| . 
\end{equation}
The operator norms on the rhs of (\ref{1.24:44}) can be bounded by the trace ideals norms, which in turn can be bounded by Theorem~4.1 in \cite{traceideals}. 
\begin{equation}
 \|\mathfrak{T}^{(1)}_n (p_\zeta)\| \leq \|\mathfrak{T}^{(1)}_n (p_\zeta)\|_2 \leq (2\pi\alpha')^{-\frac 32} \|V_{13}\|_1 [J_n^{(1)}(p_\zeta)]^{\frac 12}, 
\end{equation}
where 
\begin{gather}
 J_n^{(1)}(p_\zeta) := 
\int_{|s|\leq 1 } \frac{d^3 s}{\xi_n^2 (m_{y\eta} s + m_{y\zeta}p_\zeta) \bigl[s^2 + p_\zeta^2 + k_n^2\bigr]^{\frac{3-\delta}2}}\leq 
\int_{|s|\leq 1 }  \frac{d^3 s}{\bigl| m_{y\eta} s + m_{y\zeta}p_\zeta\bigr|^{\frac \delta4} |s|^{3-\delta}} \nonumber\\
 + \int_{|s|\leq 1 }  \frac{d^3 s}{|s|^{3-\delta}}  \leq \left[\int_{|s|\leq 1 } \frac{d^3 s}{|s|^{3-\delta/2}}\right]^{\frac{3-\delta}{3-\delta/2}}
\left[\int_{|s|\leq 1 }  \frac{d^3 s}{\bigl| m_{y\eta} s + m_{y\zeta}p_\zeta\bigr|^{\frac 32 - \frac \delta4} }\right]^{\frac{\delta}{6-\delta}} + \int_{|s|\leq 1 }  \frac{d^3 s}{|s|^{3-\delta}} \nonumber\\
\leq \left[\int_{|s|\leq 1 } \frac{d^3 s}{|s|^{3-\delta/2}}\right]^{\frac{3-\delta}{3-\delta/2}}
\left[\int_{|s|\leq 1 }  \frac{d^3 s}{m_{y\eta}|  s|^{\frac 32 - \frac \delta4} }\right]^{\frac{\delta}{6-\delta}} + \int_{|s|\leq 1 }  \frac{d^3 s}{|s|^{3-\delta}} . \label{1.24:46}
\end{gather}
In (\ref{1.24:46}) we have used H\"olders inequality. The integrals on the rhs of (\ref{1.24:46}) are convergent and independent of $p_\zeta$ and $n$, hence, $\sup_{p_\zeta}\|\mathfrak{T}^{(1)}_n (p_\zeta)\| < \infty$. Similarly 
\begin{equation}
 \|\mathfrak{T}^{(2)}_n (p_\zeta)\| \leq \|\mathfrak{T}^{(2)}_n (p_\zeta)\|_3 \leq (2\pi\alpha')^{-1} \|V_{13}\|_{3/2} [J_n^{(2)}(p_\zeta)]^{\frac 13}, 
\end{equation}
where 
\begin{gather}
 J_n^{(2)}(p_\zeta) := 
\int_{|s|> 1 } \frac{d^3 s}{\xi_n^3 (m_{y\eta} s + m_{y\zeta}p_\zeta) |s|^{\frac{9-3\delta}{2}}}\leq 
\int_{|s|> 1 }  \frac{d^3 s}{\bigl| m_{y\eta} s + m_{y\zeta}p_\zeta\bigr|^{\frac{3\delta}{8}} |s|^{\frac{9-3\delta}{2}}} \nonumber\\
+ \int_{|s|> 1 }  \frac{d^3 s}{|s|^{\frac{9-3\delta}{2}}}  \leq \frac 1{m_{y\eta}^3}\int_{|s'|\leq 1} \frac{d^3 s'}{|s'|^{\frac{3\delta}{8}}} 
+ 2 \int_{|s|> 1 }  \frac{d^3 s}{|s|^{\frac{9-3\delta}{2}}} . \label{1.24:48}
\end{gather}
The integrals on the rhs of (\ref{1.24:48}) obviously converge. Thus we find that $\sup_{p_\zeta}\|\mathfrak{T}^{(2)}_n (p_\zeta)\| < \infty$ and $\|\mathfrak{T}_n\|$ is uniformly bounded 
(note that $\|V_{13}\|_{3/2}$ in (\ref{1.24:48}) is bounded because $V_{13} \in L^1 (\mathbb{R}^3) \cap L^2 (\mathbb{R}^3)$). 

Now we pass to estimating  $\|\mathfrak{D}_n\|$ and use the same method. Calculations similar to above ones give 
\begin{gather}
  \|\mathfrak{D}_n\| \leq (2\pi\alpha)^{-\frac 32} \|V_{12}\|_1 \left[\sup_{p_y} J_n^{(3)}(p_y)\right]^{\frac 12} \nonumber\\
+ 
\sup_{p_y} \left\| \bigl|V_{12}(\alpha x)\bigr|^{\frac 12} \left[\tilde t_n \bigl( m_{x\zeta} (-i\nabla_x) + m_{y\zeta} p_y \bigr)\right]^{-1}\right\| , \label{1.24:53}
\end{gather}
where 
\begin{gather}
  J_n^{(3)}(p_y) := \int_{|s|\leq 1 } \frac{d^3 s}{{\tilde t_n}^2 (m_{x\zeta} s + m_{y\zeta}p_y) |s|^{1+\delta}} \leq 
\int_{|s|\leq 1 } \frac{d^3 s}{\bigl| m_{x\zeta} s + m_{y\zeta}p_y \bigr|^{2-2\delta} |s|^{1+\delta}} \nonumber\\
 + \int_{|s|\leq 1 } \frac{d^3 s}{|s|^{1+\delta}} \leq \left[\int_{|s|\leq 1 } \frac{d^3 s}{\bigl| m_{x\zeta} s + m_{y\zeta}p_y \bigr|^{3-\delta}}\right]^{\frac{2-2\delta}{3-\delta}} \left[\int_{|s|\leq 1 } \frac{d^3 s}{|s|^{3-\delta}}\right]^{\frac{1+\delta}{3-\delta}} + \int_{|s|\leq 1 } \frac{d^3 s}{|s|^{1+\delta}} \nonumber\\
 \leq \left[\int_{|s|\leq 1 } \frac{d^3 s}{m_{x\zeta} | s |^{3-\delta}}\right]^{\frac{2-2\delta}{3-\delta}} \left[\int_{|s|\leq 1 } \frac{d^3 s}{|s|^{3-\delta}}\right]^{\frac{1+\delta}{3-\delta}} + \int_{|s|\leq 1 } \frac{d^3 s}{|s|^{1+\delta}} . \label{1.24:51}
\end{gather}
The integrals on the rhs of (\ref{1.24:51}) converge and it remains to estimate the operator norm in (\ref{1.24:53}). 
\begin{gather}
\left\| \bigl|V_{12}(\alpha x)\bigr|^{\frac 12} \left[\tilde t_n \bigl( m_{x\zeta} (-i\nabla_x) + m_{y\zeta} p_y \bigr)\right]^{-1}\right\| \nonumber\\
 \leq 
\left\| \bigl|V_{12}(\alpha x)\bigr|^{\frac 12} \left[\tilde t_n \bigl( m_{x\zeta} (-i\nabla_x) + m_{y\zeta} p_y \bigr)\right]^{-1} \chi_{[0,1]} \left( \bigl| m_{x\zeta} (-i\nabla_x) + m_{y\zeta} p_y \bigr| \right)\right\|_2 + \bigl\| V_{12}\bigr\|_\infty \nonumber\\
 \leq \bigl\| V_{12}\bigr\|_\infty + (2\pi\alpha)^{-\frac 32} \|V_{12}\|_1 \left[\sup_{p_y} \int_{|m_{x\zeta}s + m_{y\zeta}p_y| \leq 1 } \frac{d^3 s}{\bigl|m_{x\zeta}s + m_{y\zeta}p_y\bigr|^{2-2\delta}}\right]^{\frac 12} \nonumber\\
\leq \bigl\| V_{12}\bigr\|_\infty + (2\pi\alpha m_{x\zeta})^{-\frac 32} \|V_{12}\|_1  \left[\int_{|s| \leq 1 } \frac{d^3 s}{|s|^{2-2\delta}} \right]^{\frac 12} , 
\end{gather}
where we have again used Theorem 4.1 in \cite{traceideals}. 
Thus we find that $\sup_n \|\mathfrak{D}_n\| < \infty$ and, hence, $\sup_n M_n^{-1} \|\Psi^{(1)}_n \| < \infty$.  
Note that because all potentials are nonpositive we have 
\begin{equation}
 M_n^{-1} \Psi^{(3)}_n = \lambda_n \mathcal{T}_n^{(1)} Q_n \left[M_n^{-1}\Psi^{(1)}_n + M_n^{-1}\Psi^{(2)}_n\right] , 
\end{equation}
where 
\begin{gather}
 \mathcal{T}^{(1)}_n := |v_{13}|^{1/2} \bigl[ H_0 + k_n^2\bigr]^{-1}  |v_{13}|^{1/2}  , \\
Q_n := \Bigl\{1 - \lambda_n |v_{13}|^{1/2} \bigl[H_0 
+ k_n^2\bigr]^{-1} |v_{13}|^{1/2} \Bigr\}^{-1} . 
\end{gather}
Operators $ \mathcal{T}^{(1)}_n , Q_n$ are defined in the same way in Eqs. (62) and (46) in \cite{3}. In \cite{3} 
it was proved that $\sup_n \|\mathcal{T}_n^{(1)}\| < \infty$ and 
$\sup_n \|Q_n\| < \infty$. Thus $\sup_n M_n^{-1} \| \Psi^{(3)}_n\| < \infty$ as claimed. 
\end{proof}

\end{document}